\documentclass[12pt]{article}
 \usepackage{amsfonts, amsmath, amssymb, amsthm}
 \usepackage{array, authblk}
 \usepackage[titletoc,title]{appendix}
 \usepackage{bbm, bm, booktabs}
 \usepackage{color, comment, csquotes}
\usepackage{dsfont}
\usepackage{enumerate}
 \usepackage{float, fullpage}
 \usepackage{graphics, graphicx}
 \usepackage{lscape, listings}
 \usepackage{makecell, mathrsfs, mathtools, multirow}
 \usepackage{paralist, placeins}
 \usepackage{rotating}
 \usepackage{scalerel, setspace, subdepth}
\usepackage{thmtools}
\usepackage{url}
 \usepackage{verbatim}
\usepackage{xr}
\usepackage[letterpaper, portrait, margin=1in]{geometry}
 
\usepackage{psfrag,epsf}
\usepackage[shortlabels]{enumitem}
\usepackage{epstopdf}
\usepackage{lipsum}
\usepackage{listings}
\usepackage{fancyref}
\usepackage{float}
\usepackage[font=footnotesize,labelfont=bf]{caption}

 \usepackage[round]{natbib}
 \usepackage[caption=false]{subfig}
 \usepackage[normalem]{ulem}
 \usepackage[usenames,dvipsnames]{xcolor}

 \makeatletter
 \newcommand*{\addFileDependency}[1]{
   \typeout{(#1)}
   \@addtofilelist{#1}
   \IfFileExists{#1}{}{\typeout{No file #1.}}
 }
 \makeatother
 \newcommand*{\myexternaldocument}[1]{%
     \externaldocument{#1}%
     \addFileDependency{#1.tex}%
     \addFileDependency{#1.aux}%
 }

\myexternaldocument{supplement}

 \declaretheoremstyle[
   spaceabove=6pt,
   spacebelow=6pt,
   headfont=\normalfont\itshape,
   postheadspace=1em,
   headpunct={},
   postheadspace = \newline
 ]{mystyle} 
 
 \declaretheorem[name={Remark},style=mystyle]{remark}
 \declaretheorem[name={Lemma},style=mystyle]{lemma}
 \declaretheorem[name={Theorem},style=mystyle]{theorem}
 
\declaretheorem[name={Proposition},style=mystyle]{proposition}

\def \bz{\mathbf{z}}
\def \T{\text{T}}
\def \iidsim{\stackrel{iid}{\sim}}
\newcommand*\Nzazb{N_{z_A, z_B}}

\newcommand*\Nzaplus{N_{z_A, +1_B}}
\newcommand*\Nzaminus{N_{z_A, -1_B}}
\newcommand*\Nbz{N_{\bz}}
\newcommand*\Nbzstar{N_{\bz^\star}}

\newcommand*\Nzadot{N_{z_A \boldsymbol{\cdot}}}

\newcommand*\Nzaonedot{N_{z_{A1} \boldsymbol{\cdot}}}
\newcommand*\Nzatwodot{N_{z_{A2} \boldsymbol{\cdot}}}
\newcommand*\Nplusdot{N_{+1_A \boldsymbol{\cdot}}}
\newcommand*\Nminusdot{N_{-1_A \boldsymbol{\cdot}}}

\begin{document}
\def\spacingset#1{\renewcommand{\baselinestretch}%
{#1}\small\normalsize} \spacingset{1}

\title{Causal inference from treatment-control studies having an additional factor with unknown assignment mechanism}

  \title{Causal inference from treatment-control studies having an additional factor with unknown assignment mechanism\thanks{\noindent{\textit{Email: } \texttt{np755@stat.rutgers.edu}. Kristen Hunter was supported by the Department of Defense (DoD) through the National Defense Science \& Engineering Graduate Fellowship (NDSEG) Program and
Nicole Pashley was supported by the National Science Foundation Graduate Research Fellowship under Grant No. DGE1745303 while working on this paper.
Any opinion, findings, and conclusions or recommendations expressed in this material are those of the authors and do not necessarily reflect the views of the National Science Foundation.
}}}
    \author[1]{Nicole E. Pashley}
    \affil[1]{Department of Statistics, Rutgers University}
    \author[2]{Kristen  B. Hunter}
    \affil[2]{Department of Statistics, Harvard University}
    \author[3]{Katy McKeough}
    \affil[3]{Boston Red Sox, Sports Analytics}
    \author[4]{Donald B. Rubin}
    \affil[4]{Fox School of Business, Temple University}
    \author[1]{Tirthankar Dasgupta}

\maketitle

\footnotetext{To whom correspondence should be addressed.}

\begin{abstract}
{Consider a situation with two treatments, the first of which is randomized but the second is not, and the multifactor version of this.
Interest is in treatment effects, defined using standard factorial notation.
We define estimators for the treatment effects and explore their properties when there is information about the nonrandomized treatment assignment and when there is no information on the assignment of the nonrandomized treatment.
We show when and how hidden treatments can bias estimators and inflate their sampling variances.

}
\noindent%
{\it Keywords:} {Multiple treatments; Neymanian Inference; Potential outcomes; SUTVA.}
\end{abstract}

\spacingset{1.45} 
\section{Introduction}

Consider a randomized trial being conducted in a finite population of $N$ units to assess the causal effect of a factor (which we denote by $A$), with two levels coded as $-1_A$ and $+1_A$. The numbers of units randomly assigned to levels $-1_A$ and $+1_A$ are fixed to be $\Nminusdot$ and $\Nplusdot$ respectively.
The causal inference problem associated with such an experiment fits nicely into the framework of finite-population treatment control studies, on which there is a vast literature \citep[e.g.,][]{Imbens2015}.
However, some trials could also involve an additional causal factor (denoted by $B$), with two levels $-1_B$ and $+1_B$, one of which has to be necessarily chosen for each experimental unit, so that choice is not controlled in the design.
Th estimand of interest is naturally then the factorial effect of factor A, averaging over levels of factor $B$.
If the levels of factor $B$, like those of factor $A$, were randomly assigned to the $N$ units (with or without restrictions), the problem fits into the framework of causal inference from a randomized $2^2$ factorial experiment \citep{Dasgupta2015, Zhao2016}.
However, there may be situations in which the assignment mechanism for factor $B$ is not only unknown to the analyst, but even the actual assignments may be partially or completely missing \citep[e.g.,][Section 6.3]{rubin1991practical}.
We will consider two situations in this setting: situation I, in which the actual assignment allocation is unknown, and situation II, in which the assignment allocation is known.

An example of an experiment with the setting described above was the CORONIS study, which examined the effectiveness of 5 Cesarean section (C-section) procedural steps \citep{Coronis2013}.
Here we describe a hypothetical experiment inspired by this setting, but simplified and focusing on just two factors to facilitate understanding.
Assume the primary focus was to assess the effect of two alternatives of a specific C-section procedural step (blunt versus sharp abdominal entry) on the outcome ``maternal infectious morbidity.''
However, there was another vital procedural step with alternatives (e.g., single-layer versus double-layer closure of the uterus) that was believed to possibly affect the outcome and one of these alternatives had to be chosen for each patient at the time of the surgery.
Considering the difficulty in randomizing patients with respect to both factors (abdominal entry and closure of uterus), and to increase compliance in the experiment, hospitals were asked to randomize patients only with respect to the first factor, so the assignment mechanism with respect to the other factor was observational.
Moreover, there may only be partial information on assignment of each patient to the second factor.

To understand the key questions that arise, consider the following toy example with eight units. Four units are randomly assigned to level $-1_A$ of factor $A$ and the remaining four units to level $+1_A$.
The units are also assigned to the two levels of factor $B$ according to an \emph{unknown} probabilistic assignment mechanism.
Table \ref{tab:toy} shows two possible situations with respect to the availability of assignment information.
In both, unit 1 receives $(-1_A, -1_B)$, units 2, 3, 4 receive $(-1_A, +1_B)$, units 5, 6 receive $(+1_A, -1_B)$, and units 7, 8 receive $(+1_A, +1_B)$.
However, whereas the assignment allocation for factor $B$ is not available in situation I, it is is available in situation II.

\begin{table}[htbp]
\caption{A toy example on assignment of eight units to four treatments} \label{tab:toy}
\centering
\begin{tabular}{c|cc|cc||c|cc|cc}
\multicolumn{5}{c||}{\textbf{SITUATION I}} & \multicolumn{5}{c}{\textbf{SITUATION II}} \\ 
Unit & $-1_A$ & $+1_A$ & $-1_B$ & $+1_B$                    & Unit & $-1_A$ & $+1_A$ & $-1_B$ & $+1_B$ \\ \hline
1    &  X     &        & ? & ?        &  1   &  X     &        &   X    &        \\
2    &  X     &        & ? & ?        &  2   &  X     &        &        &   X    \\
3    &  X     &        & ? & ?       &  3   &  X     &        &        &   X    \\
4    &  X     &        & ? & ?        &  4   &  X     &        &        &   X    \\
5    &        &   X    & ? & ?        &  5   &        &   X    &   X    &        \\
6    &        &   X    & ? & ?        &  6   &        &   X    &   X    &        \\
7    &        &   X    & ? & ?        &  7   &        &   X    &        &   X    \\
8    &        &   X    & ? & ?        &  8   &        &   X    &        &   X    \\ \hline
\end{tabular}
\end{table}

We are interested in understanding the properties of estimators of the causal effect of factor $A$. Such properties can be evaluated and understood using the potential outcomes framework, introduced for randomized experiments by \citet{Neyman1923}, and extended more generally in the Rubin Causal Model \citep{rubin1974estimating} or RCM \citep{Holland1986}. Denoting the potential outcomes for unit $i (=1, \ldots, N)$ when exposed to treatments $+1_A$ and $-1_A$ by $Y_i(+1_A)$ and $Y_i(-1_A)$ respectively, the average treatment $A$ effect is
$$ \sum_{i=1}^N Y_i(+1_A) / N - \sum_{i=1}^N Y_i(-1_A) / N $$
and is often the finite population causal estimand of interest, which is our focus here.

Denote the observed outcome for unit $i$ by $Y_i^{\textnormal{\text{obs}}}$.
In situation I, a possible estimator of the causal effect of factor $A$ defined above that incorporates all available information is 
\begin{equation}
\left( \sum_{i=5}^8 Y_i^{\textnormal{\text{obs}}} \right)/4 - \left( \sum_{i=1}^4 Y_i^{\textnormal{\text{obs}}} \right)/4, \label{eq:naive_estimator}
\end{equation}
i.e., the difference between the average observed outcomes corresponding to levels $+1_A$ and $-1_A$. Classic randomization based results \citep[see, e.g.,][]{Imbens2015} give that the above estimator is an unbiased estimator of the true causal effect of factor $A$ irrespective of the assignment of units to the two levels of factor $B$ when $A$ is randomized and the potential outcomes of each unit can be expressed only as functions of the level of $A$; or in other words, when there are no ``hidden'' or ``multiple'' versions of the two treatments $-1_A$ and $+1_A$. 

The assumption that there are no hidden or multiple versions of the treatments is a part of the Stable Unit Treatment Value Assumption \citep[SUTVA,][]{Rubin1980}. The current scenario is an example where the SUTVA may be violated because a potential outcome of a patient when exposed to treatment $-1_A$ cannot be uniquely defined unless the level of factor $B$ is incorporated when factor $B$ affects the outcome. 

Several researchers have considered the problem of inferring causal effects of treatments when SUTVA is violated due to hidden or multiple versions of treatments. See \cite{Vanderweele2013} for a review. A related setting is where a mediator can be manipulated and randomized, rather than being fixed for each individual conditional on the primary treatment, effectively leading to variations in treatments \citep{rubin2004direct}. We consider a special case of a violation of SUTVA, where the ``hidden'' version is actually a pseudo treatment factor. Such a setting allows us to define potential outcomes as functions of joint interventions \citep{PearlRobins1995, Pearl2001}. However, our approach is novel in the sense that we formulate the problem in the framework of a factorial experiment \citep[see, e.g.,][]{Dasgupta2015, lu2016randomization}, define meaningful estimands accordingly, and make use of recently developed ideas in the field of causal inference from factorial experiments to evaluate properties of estimators in this setting.

Specifically, we address the following questions in a factorial design setting: (i) what is the bias of estimator (\ref{eq:naive_estimator}), (ii) under what conditions (possibly associated with the assignment mechanism of $B$) is it unbiased, (iii) what is the sampling variance of (\ref{eq:naive_estimator}), and (iv) what conditions are necessary for estimation of sampling variance. We also show how this estimator can be improved by taking the assignment data for $B$ into consideration.

In Section~\ref{sec:setup}, we introduce the notation, including potential outcomes and estimands. We describe the assumptions associated with the assignment mechanisms for the two factors under study. In Section~\ref{sec:caseI}, we discuss situation I, where we are missing the assignment information for factor $B$. The sampling distribution of the estimated factorial effect is studied and Neymanian interval estimators are proposed. In Section~\ref{sec:caseII}, we discuss situation II, in which the assignment information for factor $B$ is known. Modified point and interval estimators are proposed and compared to those for situation I.
We order the situations in this way because although it is harder to perform good inference in situation I due to the missing information, the missing information also means there is less one can do.
Therefore, in some ways the inference is simplified in situation I.
Section~\ref{sec:sim} reports numerical and simulation studies to evaluate performances of the proposed inference methods and to show how such performances are affected by various uncontrollable factors like the unknown assignment mechanism of factor $B$ and the potential outcomes matrix.
All proofs are available in the supplementary material.

\section{Notation}
\label{sec:setup}

\subsection{Potential outcomes and estimands}
\label{sec:pot}

We denote a treatment combination by $\bz = (z_A, z_B)$ where $z_{A} \in \{-1_{A}, +1_{A}\}$ and $z_{B} \in \{-1_{B}, +1_{B}\}$ represent the levels of factors $A$ and $B$ respectively. Let $\mathds{Z}$ denote the set of the four possible treatment combinations under SUTVA, or more simply, the four treatments. Let $Y_i(\bz) = Y_i(z_A, z_B)$ denote the potential outcomes for unit $i$ when exposed to treatment $\bz$, for $i=1, \ldots, N$. Each unit thus has four potential outcomes, $Y_{i}(+1_{A}, +1_{B})$, $Y_{i}(+1_{A}, -1_{B})$, $Y_{i}(-1_{A}, +1_{B})$, and $Y_{i}(-1_{A}, -1_{B})$, as shown in Table \ref{tab:pot}.

\begin{table}[ht]
    \centering \footnotesize
 \caption{Potential outcomes, unit-level and finite population-level causal estimands}
 \vspace{.1 in}
    \label{tab:pot}    
    \begin{tabular}{c c c c | c | c | c | c}
(1) & (2) & (3) & (4) & (5)= & (6)= & (7)= & (8) \\ 
    &     &     &     & (1)-(3) & (2)-(4) & $\frac{(5)+(6)}{2}$ & $\frac{(5)-(6)}{2}$  \\ \hline \hline
 \multicolumn{4}{c|}{Potential outcomes for treatment combinations} & \multicolumn{4}{c}{Causal effects} \\
 \multicolumn{4}{c|}{} & \multicolumn{2}{c|}{Conditional} & Main & Interaction \\
 $(+1_{A}, +1_{B})$ & $(+1_{A}, -1_{B})$ & $(-1_{A}, +1_{B})$ & $(-1_{A}, -1_{B})$ & $\theta_{i,A|+1_B}$ & $\theta_{i,A|-1_B}$ & $\theta_{i,A}$ & $\theta_{i,AB}$ \\ 
\multicolumn{4}{c|}{} &  &  &  &  \\ \hline
$Y_{1}(+1_{A}, +1_{B})$ & $Y_{1}(+1_{A}, -1_{B})$ & $Y_{1}(-1_{A}, +1_{B})$ & $Y_{1}(-1_{A}, -1_{B})$ & $\theta_{1,A|+1_B}$ & $\theta_{1,A|-1_B}$ & $\theta_{1,A}$ & $\theta_{1,AB}$ \\
\vdots & \vdots & \vdots & \vdots & \vdots & \vdots & \vdots & \vdots \\
$Y_{N}(+1_{A}, +1_{B})$ & $Y_{N}(+1_{A}, -1_{B})$ & $Y_{N}(-1_{A}, +1_{B})$ & $Y_{N}(-1_{A}, -1_{B})$ & $\theta_{N,A|+1_B}$ & $\theta_{N,A|-1_B}$ & $\theta_{N,A}$ & $\theta_{N,AB}$ \\ \hline \hline
& & & & & & & \\
$\bar{Y}(+1_{A}, +1_{B})$ & $\bar{Y}(+1_{A}, -1_{B})$ & $\bar{Y}(-1_{A}, +1_{B})$ & $\bar{Y}(-1_{A}, -1_{B})$ & $\theta_{A|+1_B}$ & $\theta_{A|-1_B}$ & $\theta_A$ & $\theta_{AB}$ \\
& & & & & & & \\ 
$S^2(+1_{A}, +1_{B})$ & $S^2(+1_{A}, -1_{B})$ & $S^2(-1_{A}, +1_{B})$ & $S^2(-1_{A}, -1_{B})$ & $S^2_{A|+1_B}$ & $S^2_{A|-1_B}$ & $S^2_{A}$ & $S^2_{AB}$ \\ \hline
    \end{tabular}
\end{table}

Consider the following unit-level conditional causal effects shown in columns (5) and (6) of Table \ref{tab:pot}:
\begin{equation}\label{eqn:thetaA_cond_unit}
\theta_{i,A|z_B} = Y_i(+1_A, z_B) - Y_i(-1_A, z_B), \ i=1, \ldots, N, \ z_B \in \{-1_B, +1_B\},
\end{equation}
which are the conditional effects of changing factor $A$ from $-1_A$ to $+1_A$ when the level of $B$ is held fixed at $z_B$. Averaging over the $N$ units, the average conditional effect of factor $A$ when $B$ is fixed at $z_B$ (shown in the second to last row of Table \ref{tab:pot}) is
\begin{equation}\label{eqn:thetaA_cond_overall}
\theta_{A|z_B} = \frac{1}{N} \sum_{i=1}^{N} \theta_{i,A|z_B} = \bar{Y}(+1_{A}, z_B) - \bar{Y}(-1_{A}, z_B), \ z_B \in \{+1_B, -1_B\}
\end{equation}
where $\bar{Y}(z_A, z_B) = N^{-1} \sum_{i=1}^{N} Y_i(z_A, z_B)$ is the average of the potential outcomes over all units for treatment combination $(z_A, z_B)$. Unit-level and population-level conditional effects of factor $B$ given factor $A$, denoted as $\theta_{i,B|z_A}$ and $\theta_{B|z_A}$ can be defined analogously for $i=1, \ldots, N$ and $z_A \in \{-1_A, +1_A\}$. Whereas they are not of primary interest in our problem, we shall see later that the presence or absence of such conditional effects plays an important role in the inference on the primary causal estimand of interest, which will be introduced now.

Averaging the conditional effects $\theta_{i,A|z_B}$ and $\theta_{A|z_B}$, defined in (\ref{eqn:thetaA_cond_unit}) and (\ref{eqn:thetaA_cond_overall}), over the two levels of $B$ respectively yields the unit and population level main effects of factor $A$:
\begin{eqnarray}
\theta_{i,A} = \left(\theta_{i,A|+1_B} + \theta_{i,A|-1_B} \right)/2, \ i=1, \ldots, N, \text{ and,} \nonumber \\
\theta_A = N^{-1} \sum_{i=1}^N \theta_{i,A} = \left(\theta_{A|+1_B} + \theta_{A|-1_B} \right)/2. \label{eqn:thetaA_overall}
\end{eqnarray}

The unit-level and population-level main effects, $\theta_{i,A}$ and $\theta_A$, are shown in column (7) of Table \ref{tab:pot}. Substituting (\ref{eqn:thetaA_cond_overall}) in (\ref{eqn:thetaA_overall}), we obtain
\begin{eqnarray}
\theta_{A} &=& \frac{\bar{Y}(+1_{A}, +1_{B}) + \bar{Y}(+1_{A}, -1_{B})- \bar{Y}(-1_{A}, +1_{B}) - \bar{Y}(-1_{A}, -1_{B})}{2} \label{eqn:thetaA_dot1}  \\
&=& \bar{Y}(+1_{A}, \cdot) - \bar{Y}(-1_{A}, \cdot), \label{eqn:thetaA_dot} 
\end{eqnarray}
where
\begin{align*}
\bar{Y}(z_{A}, \cdot) &= \frac{\bar{Y}(z_A,+1_{B}) + \bar{Y}(z_A, -1_{B})}{2}= \frac{1}{2N}\sum_{z_{B}}\sum_{i=1}^{N}Y_{i}(z_{A}, z_{B})
\end{align*}
represents the average of the potential outcomes for units when assigned to factor $z_{A}$ averaged over both levels of factor $B$.

Finally, we define the interaction between factor $A$ and factor $B$, at the unit or population level respectively (see the last column of Table \ref{tab:pot}), as follows:

\begin{equation}
\theta_{i, AB} = \frac{\theta_{i, A|+1_{B}} - \theta_{i, A|-1_{B}}}{2} = \frac{\theta_{i, B|+1_{A}} - \theta_{i, B|-1_{A}}}{2}, \nonumber
\end{equation}
\begin{equation}\label{eqn:thetaAB_overall}
\theta_{AB} = \frac{\theta_{A|+1_{B}} - \theta_{A|-1_{B}}}{2} = \frac{\theta_{B|+1_{A}} - \theta_{B|-1_{A}}}{2}.
\end{equation}

Note that when dealing with one treatment in a computation or result, we let $\bz = (z_{A},z_{B})$ be the general notation for the treatment combination. The explicit ($(z_{A},z_{B})$) and condensed ($\bz$) notations will be used interchangeably, based on whichever provides a clearer or cleaner result. When dealing with two different treatment combinations, $\bz$ and $\bz^*$, we will generally use the notation $\bz = (z_{A1}, z_{B1})$ and $\bz^* = (z_{A2}, z_{B2})$. 


\subsection{Assignment mechanisms and observed outcomes}
\label{sec:am}

The properties of the assignment indicator are the foundation of the randomization-based and repeated sampling properties of our treatment effect estimators.
As earlier, we assume that the level of factor $A$ is completely randomized, so that $N_{+1_A \cdot}$ units are assigned to level $+1_{A}$ of factor $A$ and the rest ($N_{-1_A \cdot}$) to level $-1_{A}$, where $N_{+1_A \cdot}$ and $N_{-1_A \cdot}$ are fixed. We assume that the assignment of the level of factor $B$ is not determined by the experimenter (i.e. is not able to be randomized by the experimenter). For instance, a doctor or the patient himself may decide the assignment level of factor $B$, possibly based on covariate values unique to the patient. For simplicity, throughout this work we assume that factor $B$ follows a Bernoulli assignment, and we assume that the probability of assignment to level $+1_B$ of factor $B$ is the same for all units and so not dependent on covariates. We denote the probability of a unit being assigned to $+1_B$ for factor $B$ as $\pi_B \in (0,1)$. We also assume that the assignment mechanisms for factor $A$ and factor $B$ are independent. As noted in literature \citep[e.g.,][Chapter 4]{Imbens2015}, a problem associated with such Bernoulli treatment assignment is the fact that the actual number of units assigned to one of the two levels of factor $B$ may be zero with a positive probability. However, this situation will not be an issue in situation I described in Section \ref{sec:caseI}, and will be addressed in situation II described in Section \ref{sec:caseII} by conditioning on the number of units assigned to each level of $B$ being strictly positive.

We define the assignment indicator $W_i(\bz)$ as a binary random variable taking value $1$ if unit $i$ is assigned to treatment $\bz$ and $0$ otherwise. The assignment indicator can be expressed as
\begin{equation}
W_{i}(\bz) = W_{i}(z_{A}, z_{B}) = W_{i,A}(z_{A})W_{i,B}(z_{B}) \nonumber
\end{equation}
for $\bz \in \{-1, +1\}^2$, $i = 1, \dots, N$, where $W_{i,A}(z_{A}) = 1$ if unit $i$ is assigned to level $z_{A}$ of factor $A$ and 0 otherwise, and $W_{i,B}(z_{B}) = 1$ if unit $i$ is assigned to level $z_{B}$ of factor $B$ and 0 otherwise. The expected value of the assignment indicators is equal to the probability of the corresponding treatment assignment. Thus, from our randomization setup, we assume that $E\left[W_{i,A}(+1_{A})\right] = N_{+1_A \cdot}/N$ and $E\left[W_{i,B}(+1_{B})\right] = \pi_{B}$, which is unknown.

Define the $N$-component vectors
$$ \mathbf{W}_A = \left(W_1(+1_A), \ldots, W_N(+1_A) \right)^{\T}, \quad \mathbf{W}_B = \left(W_1(+1_B), \ldots, W_N(+1_B) \right)^{\T},$$
where $\T$ denotes transposition.

Denoting by $\mathbf{1}$ the $N$-component vector of ones, the joint distribution of the assignment indicators can be written as

\begin{equation}
P[\mathbf{W}_A = \mathbf{w}_A, \mathbf{W}_B = \mathbf{w}_B] = \left\{   
   \begin{array}{cc}
   \left( \frac{N_{1_A+ \cdot}!N_{1_A- \cdot}!}{N!} \right) \pi_B^{\mathbf{1}^{\T} \mathbf{w}_B} (1-\pi_B)^{N - \mathbf{1}^{\T} \mathbf{w}_B}, & \mathbf{1}^{\T} \mathbf{w}_A = N_{1_A+ \cdot}, \\
   0 & \text{otherwise}.
   \end{array}
    \right. \label{eq:joint_assignment}
\end{equation}

The observed unit-level potential outcomes is
\begin{equation}
Y_{i}^{\text{obs}} = \sum_{\bz}W_{i}(\bz)Y_{i}(\bz) = \sum_{z_{A}} \sum_{z_{B}}W_{i,A}(z_{A})W_{i,B}(z_{B})Y_{i}(z_A, z_B), \nonumber
\end{equation}
and the average observed potential outcomes for a particular treatment $\bz$ is
\begin{equation}\label{eqn:ybarobs}
\bar{Y}^{\text{obs}}(\bz) = \frac{1}{\Nbz}\sum_{i=1}^{N} W_i(\bz)Y_{i}(\bz),
\end{equation}
where $\Nbz = \Nzazb$ is a random variable, potentially unobserved, representing the number of units assigned to the combination $\bz = (z_A, z_B)$. That is,
\begin{equation}\label{eqn:nbz}
\Nzazb = \sum_{i=1}^{N} W_i(z_A, z_B).
\end{equation}

We note that the joint distribution of $\{\Nbz, \bz \in \mathds{Z}\}$ is that of two independent multinomials or equivalently two Bernoullis, $Bern(N_{+1_A \cdot}, \pi_B)$ and $Bern(N_{-1_A \cdot}, \pi_B)$.


\section{Inference on the main effect of factor $A$ when assignment allocation on $B$ is unknown}\label{sec:caseI}

We now introduce an estimator for $\theta_A$, the main effect of factor $A$ in situation I, where we know that each unit receives one of two levels of factor $B$, but do not know the level actually assigned. Therefore none of the $N_{\bz}$'s defined in (\ref{eqn:nbz}) are observed and $\bar{Y}^{\text{obs}}(\bz)$ defined in (\ref{eqn:ybarobs}) cannot be computed for any $\bz$. 

However, an estimator of $\theta_A$ can be obtained by substituting estimators of $\bar{Y}(+1_A, \cdot)$ and $\bar{Y}(-1_A, \cdot)$ in (\ref{eqn:thetaA_dot}). A natural estimator of $\bar{Y}(z_A, \cdot)$, for $z_A \in\{ 1_A, +1_A\}$, is its observed average   
\begin{align*}
\bar{Y}^{\text{obs}}_{1}(z_{A}, \cdot) &= \frac{1}{\Nzadot}\sum_{i=1}^{N}W_i(z_A)Y_{i}^{\text{obs}}, \ z_A \in\{ 1_A, +1_A\},
\end{align*}
which is well-defined because $\Nzadot$ and $W_i(z_A)$ for $i=1, \ldots, N$ are known.   Consequently, we define
the following estimator of $\theta_A$:
\begin{equation}
\widehat{\theta}_{A, 1} = \bar{Y}_{1}^{\text{obs}}(+1_{A}, \cdot) - \bar{Y}_{1}^{\text{obs}}(-1_{A}, \cdot). \label{est_theta_a_est} 
\end{equation}
This estimator is in a na\"{i}ve sense the ``best we can do'' for Case I, in that it appears to incorporate all information available to us. However, ignoring the assignment of units with respect to the levels of factor B will result in $\widehat{\theta}_{A,1}$ being a generally biased estimator of $\theta_A$ without assumptions. The following result quantifies the bias and identifies conditions for unbiasedness.

\begin{proposition}[Expectation of estimator of main effect] 
\label{thm:theta1_hat_exp_bias}
The expectation of $\widehat{\theta}_{A, 1}$ is
\begin{align*}
E\left[\widehat{\theta}_{A, 1}\right]
&=  \sum_{z_{B}}E\left[W_{i,B}(z_{B})\right]\theta_{A|z_{B}},
\end{align*}
where $\theta_{A|z_{B}}$ is defined in (\ref{eqn:thetaA_cond_overall}).
\end{proposition}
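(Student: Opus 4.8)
The plan is to expand the estimator $\widehat{\theta}_{A,1}$ in terms of the assignment indicators $W_{i,A}$ and $W_{i,B}$, then take expectations using the independence of the two assignment mechanisms and the exchangeability of the Bernoulli assignment for factor $B$. First I would write $\bar{Y}_1^{\text{obs}}(z_A,\cdot) = \Nzadot^{-1}\sum_{i=1}^N W_i(z_A) Y_i^{\text{obs}}$ and substitute $Y_i^{\text{obs}} = \sum_{z_B} W_{i,A}(z_A') W_{i,B}(z_B) Y_i(z_A',z_B)$; since $W_i(z_A) = W_{i,A}(z_A)$ and $W_{i,A}(z_A) W_{i,A}(z_A') = 0$ unless $z_A' = z_A$, this collapses to $\bar{Y}_1^{\text{obs}}(z_A,\cdot) = \Nzadot^{-1}\sum_{i=1}^N \sum_{z_B} W_{i,A}(z_A) W_{i,B}(z_B) Y_i(z_A,z_B)$.

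Next I would take the expectation. Conditioning on $\mathbf{W}_A$ (equivalently, using the independence of $\mathbf{W}_A$ and $\mathbf{W}_B$ asserted in Section~\ref{sec:am}), the only random quantity multiplying $Y_i(z_A,z_B)$ is $W_{i,B}(z_B)$, whose expectation is $E[W_{i,B}(z_B)]$ — which is $\pi_B$ if $z_B = +1_B$ and $1-\pi_B$ if $z_B = -1_B$, and crucially does not depend on $i$. Because $\Nzadot$ is fixed (factor $A$ is completely randomized) and $E[W_{i,A}(z_A)] = \Nzadot/N$, I get $E[\bar{Y}_1^{\text{obs}}(z_A,\cdot)] = \Nzadot^{-1}\sum_{z_B} E[W_{i,B}(z_B)] \sum_{i=1}^N E[W_{i,A}(z_A)] Y_i(z_A,z_B) = \sum_{z_B} E[W_{i,B}(z_B)] \, N^{-1}\sum_{i=1}^N Y_i(z_A,z_B) = \sum_{z_B} E[W_{i,B}(z_B)] \bar{Y}(z_A,z_B)$.

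Finally, subtracting the two levels of factor $A$ and recalling the definition $\theta_{A|z_B} = \bar{Y}(+1_A,z_B) - \bar{Y}(-1_A,z_B)$ from (\ref{eqn:thetaA_cond_overall}) gives $E[\widehat{\theta}_{A,1}] = \sum_{z_B} E[W_{i,B}(z_B)](\bar{Y}(+1_A,z_B) - \bar{Y}(-1_A,z_B)) = \sum_{z_B} E[W_{i,B}(z_B)]\theta_{A|z_B}$, as claimed. The one point requiring care — and the only place the argument could go wrong — is handling the ratio $W_i(z_A)/\Nzadot$ and the products of assignment indicators correctly: because $\Nzadot$ is deterministic under complete randomization of $A$, no delicate ratio-expectation argument is needed, but one must be careful that $W_{i,A}(z_A)$ and $W_{i,B}(z_B)$ are independent across the two factors (so their joint expectation factors) even though the $W_{i,A}(z_A)$ are dependent across units $i$. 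Writing the expectation in the suggestive but slightly abusive form $E[W_{i,B}(z_B)]$ (free index $i$) is legitimate precisely because this expectation is common to all units.
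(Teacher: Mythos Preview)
Your proposal is correct and follows essentially the same route as the paper: the paper (in Lemma~\ref{lem:yobs1_prop}) likewise expands $\bar{Y}^{\text{obs}}_1(z_A,\cdot)$ in terms of the indicators, factors the expectation using the independence of $W_{i,A}$ and $W_{i,B}$, and simplifies via $E[W_{i,A}(z_A)] = \Nzadot/N$ to obtain $E[\bar{Y}^{\text{obs}}_1(z_A,\cdot)] = \pi_B\bar{Y}(z_A,+1_B) + (1-\pi_B)\bar{Y}(z_A,-1_B)$, from which the proposition follows by subtraction. Your explicit remarks about why $\Nzadot$ being nonrandom and $E[W_{i,B}(z_B)]$ being constant in $i$ are exactly the points that make the argument work.
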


\begin{remark}[Bias of the estimator and conditions for unbiasedness]
Proposition~\ref{thm:theta1_hat_exp_bias} implies that the bias of the estimator $\widehat{\theta}_{A, 1}$ is
\begin{align*}
E\left[\widehat{\theta}_{A, 1}\right] - \theta_{A}
=& \sum_{z_B}\left(E[W_{i,B}(z_B)]-\frac{1}{2}\right)\theta_{A|z_B}.
\end{align*}
The above expression implies that either of the following two conditions is sufficient for unbiasedness of $\theta_A$:
\begin{enumerate}[(a)]
\item $\pi_B = \frac{1}{2}$ for all units.
\item $\theta_{A|+1_B} = \theta_{A|-1_B}$, i.e. the conditional effects of $A$ when $B$ is held fixed at level $+1_B$ or $-1_B$ are equal. This, by (\ref{eqn:thetaAB_overall}), is equivalent to the condition that the interaction effect between factor $A$ and factor $B$, $\theta_{AB}$, is zero. A more stringent condition that implies zero interaction and thus guarantees unbiasedness of the estimator is there is no effect of factor $B$ whatsoever.
\end{enumerate}
Similar conclusions were found by \citet{delacuesta2019improving} in the context of conjoint analysis when comparing the traditional factorial estimands to so-called population effects that depend upon the distribution of other factors in the population.
\end{remark}

\noindent Next, we turn our focus on the sampling variance of our estimator. First, let the finite population variance of unit-level potential outcomes under treatment $\bz$ be 
\begin{equation} \label{eqn:s2z}
S^2(\bz) = \sum_{i=1}^{N} \frac{\left(Y_i(\bz) - \bar{Y}(\bz)\right)^2}{N-1}.
\end{equation}
Also let the finite-population covariance of unit level potential outcomes $Y_{i}(\bz)$ and $Y_{i}(\bz^*)$ be denoted
\begin{align} \label{eqn:s2zz}
S(\bz, \bz^*)
&= \sum_{i=1}^{N}\frac{\left(Y_i(\bz) - \bar{Y}(\bz)\right) \left(Y_i(\bz^\star) - \bar{Y}(\bz^\star) \right)}{N-1}.
\end{align}

The variances $S^2(\bz)$ for the four values of $\bz$ are shown in the last row and columns (1)-(4) of Table \ref{tab:pot}. Analogous to the above, we can define the variance of the unit-level conditional causal effects $\theta_{i, A|z_B}$ and that of the unconditional causal effects $\theta_{i,A}$ respectively as
\begin{eqnarray} 
S_{A|z_B}^2 &=& \sum_{i=1}^{N} \frac{\left(\theta_{i, A|z_B} - \theta_{A|z_B}\right)^2}{N-1}, \ z_B \in \{-1_B, + 1_B\}, \nonumber\\ \\
S_{A}^2 &=& \sum_{i=1}^{N} \frac{\left(\theta_{i, A} - \theta_{A}\right)^2}{N-1}. \label{eqn:s2a}
\end{eqnarray}
These variances are also shown in columns (5)-(8) of the last row of Table \ref{tab:pot}. Finally, the covariance between unit level conditional effects $\theta_{i, A|+1_B}$ and $\theta_{i, A|-1_B}$ is defined
\begin{equation} 
S_{A|+1_B, A|-1_B} = \sum_{i=1}^{N} \frac{\left(\theta_{i, A|+1_B} - \theta_{A|+1_B}\right) \left(\theta_{i, A|-1_B} - \theta_{A|-1_B}\right) }{N-1}. \nonumber 
\end{equation}
 The derivation of the sampling variance of our estimator involves some long and tedious algebraic manipulations, which are presented in Section~\ref{supp:sec_2} of the supplementary material. 

\begin{theorem}[Variance of $\widehat{\theta}_{A, 1}$]
\label{thm:theta1_hat_var}
The sampling variance of $\widehat{\theta}_{A, 1}$ is
\begin{align}\label{eqn:theta1_hat_var_alt}
Var\Big(\widehat{\theta}_{A,1}\Big)&= \sum_{z_A}\frac{\pi_B(1-\pi_B)}{\Nzadot}\left[\frac{1}{N}\sum_{i=1}^N\left(\theta_{i,B|z_A}\right)^2 +  2S^2\left((z_A, +1_B), (z_A, -1_B)\right)\right] \nonumber\\
&+\sum_{z_{A}}\frac{1}{\Nzadot}\sum_{z_B}E[W_{i,B}(z_B)]^2 S^2(z_A, z_B) - \frac{S_{A_w}^2}{N}
\end{align}
where
$S^2\left((z_A, +1_B), (z_A, -1_B)\right)$ is obtained by substituting $\bz = (z_A, +1_B)$ and $\bz^* = (z_A, -1_B)$ into (\ref{eqn:s2zz}), $S^2(z_A, z_B)$ is obtained by substituting $\bz = (z_A, z_B)$ into (\ref{eqn:s2z})  and the weighted estimator, $S_{A_w}^2$, is
\begin{eqnarray*}
S_{A_w}^2
&=& \frac{1}{N-1}\sum_{i=1}^N\left(\sum_{z_B}E[W_{i,B}(z_B)]\left(\theta_{i,A|z_B} -\theta_{A|z_B}\right)\right)^2 \nonumber \\
&=& \pi_B^2 S^2_{A|+1_B} + (1- \pi_B)^2 S^2_{A|-1_B} + 2 \pi_B (1- \pi_B) S_{A|+1_B,A|-1_B}. \label{eqn:s2X}
\end{eqnarray*}
\end{theorem}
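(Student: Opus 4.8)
The plan is to apply the law of total variance, conditioning on the completely randomized assignment $\mathbf{W}_A$ of factor $A$, and to recognize the two resulting pieces as (i) a Bernoulli-sampling variance inside each arm and (ii) a classical Neyman variance across arms. First I would rewrite the estimator so the two sources of randomness are separated: since $W_i(z_A)Y_i^{\text{obs}} = W_{i,A}(z_A)\sum_{z_B}W_{i,B}(z_B)Y_i(z_A,z_B)$ and $W_{i,B}(-1_B) = 1 - W_{i,B}(+1_B)$, we get
$$
\bar{Y}^{\text{obs}}_1(z_A,\cdot) = \frac{1}{\Nzadot}\sum_{i=1}^N W_{i,A}(z_A)\Big[Y_i(z_A,-1_B) + W_{i,B}(+1_B)\,\theta_{i,B|z_A}\Big],
$$
where $\theta_{i,B|z_A} = Y_i(z_A,+1_B) - Y_i(z_A,-1_B)$. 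Conditional on $\mathbf{W}_A$, the index sets contributing to $\bar{Y}^{\text{obs}}_1(+1_A,\cdot)$ and $\bar{Y}^{\text{obs}}_1(-1_A,\cdot)$ are disjoint, and on each set the $W_{i,B}(+1_B)$ are i.i.d.\ Bernoulli$(\pi_B)$ and independent of $\mathbf{W}_A$; hence the two halves of $\widehat{\theta}_{A,1}$ are conditionally independent.

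For the within-component $E\big[Var(\widehat{\theta}_{A,1}\mid\mathbf{W}_A)\big]$, I would note that conditionally each unit $i$ assigned to level $z_A$ contributes an independent term whose only randomness is $W_{i,B}(+1_B)$, with variance $\pi_B(1-\pi_B)\theta_{i,B|z_A}^2$; therefore $Var\big(\bar{Y}^{\text{obs}}_1(z_A,\cdot)\mid\mathbf{W}_A\big) = \Nzadot^{-2}\,\pi_B(1-\pi_B)\sum_{i:\,W_{i,A}(z_A)=1}\theta_{i,B|z_A}^2$, and by conditional independence $Var(\widehat{\theta}_{A,1}\mid\mathbf{W}_A)$ is the sum over $z_A$ of these. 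Taking expectations over the completely randomized design, where $E[W_{i,A}(z_A)] = \Nzadot/N$, turns this into $\sum_{z_A}\frac{\pi_B(1-\pi_B)}{\Nzadot}\cdot\frac{1}{N}\sum_{i=1}^N\theta_{i,B|z_A}^2$, which is the $\theta_{i,B|z_A}^2$ piece of the stated formula.

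For the between-component $Var\big(E[\widehat{\theta}_{A,1}\mid\mathbf{W}_A]\big)$, I would observe that $E[\bar{Y}^{\text{obs}}_1(z_A,\cdot)\mid\mathbf{W}_A]$ is the sample mean, over units assigned to $z_A$, of the ``collapsed'' potential outcomes $\widetilde{Y}_i(z_A) := \sum_{z_B}E[W_{i,B}(z_B)]Y_i(z_A,z_B) = \pi_B Y_i(z_A,+1_B) + (1-\pi_B)Y_i(z_A,-1_B)$. Thus $E[\widehat{\theta}_{A,1}\mid\mathbf{W}_A]$ is exactly the difference-in-means estimator in a completely randomized two-arm experiment with potential-outcome pair $\big(\widetilde{Y}_i(+1_A),\widetilde{Y}_i(-1_A)\big)$, so its variance is given by the classical Neyman formula \citep{Neyman1923}, namely $\sum_{z_A}\widetilde{S}^2(z_A)/\Nzadot - \widetilde{S}^2_A/N$, where $\widetilde{S}^2(z_A)$ is the finite-population variance of $\widetilde{Y}_i(z_A)$ and $\widetilde{S}^2_A$ that of $\widetilde{Y}_i(+1_A) - \widetilde{Y}_i(-1_A) = \sum_{z_B}E[W_{i,B}(z_B)]\theta_{i,A|z_B}$. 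Expanding $\widetilde{S}^2(z_A)$ by bilinearity of the finite-population covariance gives $\sum_{z_B}E[W_{i,B}(z_B)]^2 S^2(z_A,z_B) + 2\pi_B(1-\pi_B)\,S\big((z_A,+1_B),(z_A,-1_B)\big)$, while $\widetilde{S}^2_A$ equals $S_{A_w}^2$ by its very definition; the second displayed form of $S_{A_w}^2$ then follows by expanding the square with $E[W_{i,B}(+1_B)] = \pi_B$ and $E[W_{i,B}(-1_B)] = 1-\pi_B$. Summing the within- and between-components and collecting the two $\pi_B(1-\pi_B)/\Nzadot$ contributions (the $\theta_{i,B|z_A}^2$ term and the covariance term) reproduces (\ref{eqn:theta1_hat_var_alt}).

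The main obstacle I anticipate is bookkeeping rather than ideas: one must keep the two layers of randomness cleanly separated and, in particular, be careful that the $2S\big((z_A,+1_B),(z_A,-1_B)\big)$ term in the first bracket of the theorem originates entirely from the cross-term in $\widetilde{S}^2(z_A)$ in the between-arm (Neyman) component and not from the conditional variance, and that the finite-population variance/covariance operators (with their $N-1$ denominators) are combined correctly across the two pieces — this is where the ``long and tedious'' algebra referenced in the text resides. An alternative, more direct route is to compute $E[\widehat{\theta}_{A,1}^2]$ term by term from the joint law (\ref{eq:joint_assignment}) and subtract the square of the mean from Proposition~\ref{thm:theta1_hat_exp_bias}; but the conditioning argument organizes the computation far more transparently, since it reduces everything to a Bernoulli variance plus a standard completely-randomized-experiment variance.
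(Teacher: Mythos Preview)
Your proposal is correct and follows essentially the same approach as the paper's own proof: both apply the law of total variance conditioning on $\mathbf{W}_A$, identify the between-component as the classical Neyman variance of a completely randomized two-arm experiment with the collapsed potential outcomes $\widetilde Y_i(z_A)=\pi_B Y_i(z_A,+1_B)+(1-\pi_B)Y_i(z_A,-1_B)$ (the paper calls these $X_i(z_A)$), and compute the within-component from the independent Bernoulli draws of $W_{i,B}$. The only difference is cosmetic---the paper first works out $Var\big(\bar Y_1^{\text{obs}}(z_A,\cdot)\big)$ and $Cov\big(\bar Y_1^{\text{obs}}(+1_A,\cdot),\bar Y_1^{\text{obs}}(-1_A,\cdot)\big)$ separately before combining, whereas you apply the total-variance decomposition directly to $\widehat\theta_{A,1}$---but the substance, including your correct identification of where the $2S\big((z_A,+1_B),(z_A,-1_B)\big)$ cross-term originates, is identical.
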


\begin{remark}[Discussion and interpretation of Theorem \ref{thm:theta1_hat_var} for special cases]
Although the variance expression in (\ref{eqn:theta1_hat_var_alt}) looks complicated, it can be be simplified substantially under specific conditions pertaining to the potential outcomes and the assignment mechanism. First, note that when $\pi_B = 1/2$ the expression simplifies to
\begin{align}\label{eqn:theta1_hat_var_half}
Var\Big(\widehat{\theta}_{A,1}\Big)
&= \sum_{z_A}\frac{1}{2\Nzadot}\left[\frac{1}{2N}\sum_{i=1}^N\left(\theta_{i,B|z_A}\right)^2+S^2\left((z_A, +1_B), (z_A, -1_B)\right)\right] \nonumber\\
&+\sum_{z_{A}}\sum_{z_B}\frac{1}{4\Nzadot}S^2(z_A, z_B) - \frac{S^2_A}{N},
\end{align}
where $S_A^2$ is defined in (\ref{eqn:s2a}).

The second line of (\ref{eqn:theta1_hat_var_half}) is exactly that found in \cite{Dasgupta2015}, who derived the repeated sampling properties of estimated factorial effects for balanced designs where $\Nzadot = N/2$, and \cite{lu2016randomization}, who derived the repeated sampling properties of unbalanced designs. The first line of (\ref{eqn:theta1_hat_var_half}) reflects the inflation in the sample variance due to using the ``coarser'' estimator that does not take into consideration the difference between units exposed to the two levels of factor $B$.

\end{remark}
Strict additivity occurs when the treatment effects, conditional and otherwise, are the same for all units: for any set of treatments $\bz$ and $\bz^\star$, that is $Y_i(\bz) - Y_i(\bz^\star)$ is the same for each $i = 1, \dots, N$.
This assumption implies that the sample variances of potential outcomes under any treatment combination are the same. It is straightforward to see that under strict additivity, the sampling variance reduces to 
\begin{align}
\label{eqn:theta1_hat_var_strict}
Var\Big(\widehat{\theta}_{A,1}\Big)
&= \sum_{z_A}\frac{\pi_B(1-\pi_B)}{N\Nzadot}\left(\theta_{B|z_A}\right)^2+\sum_{z_{A}}\frac{S^2}{\Nzadot},
\end{align}
where $S^2=S^2(\bz)$ is the common variance of potential outcomes under each treatment.

Finally, note that, if factor $B$ has no effect whatsoever on the outcome, the sampling variance is identical to the value in a single factor experiment, because $\theta_{i,B|z_A}=0$, $\theta_{i,A|z_B} = \theta_{i,A}$ for all $z_B$, and $S^2\left((z_A, +1_B), (z_A, -1_B)\right) = S^2(z_A, z_B)$ for any $z_B$.
Thus, whereas absence of interaction between $A$ and $B$ alone is enough to guarantee unbiasedness of the estimator $\widehat{\theta}_{A,1}$, a stronger condition of null effect of $B$ is required to ensure that the estimator has the same sampling variance as under SUTVA.

\subsection{Variance estimation and asymptotic confidence intervals} \label{sec:caseIvarest}

Without information about factor $B$, we estimate sampling variance using the standard Neymanian sampling variance estimator used in typical treatment-control experiments.
First denote
\begin{align*}
s^2(z_A) = \frac{1}{N-1}\sum_{i:W_{i,A}(z_A)=1}\left(Y_i^{\text{obs}} - \bar{Y}_1^{\text{obs}}(z_{A}, \cdot)\right)^2, \ z_A \in \{-1_A, +1_A\}.
\end{align*}

Then our variance estimator is
\begin{align}\label{eqn:theta1_hat_var_hat}
\widehat{Var}\left(\widehat{\theta}_{A, 1}\right) = \sum_{z_A}\frac{s^2(z_A)}{N_{z_A\cdot}} = \frac{s^2(+1_A)}{N_{+1_A \cdot}} + \frac{s^2(-1_A)}{N_{-1_A \cdot}}.
\end{align}

The following result provides an expression for the bias of the sampling variance estimator (\ref{eqn:theta1_hat_var_hat}), which, like the bias of typical Neymanian variance estimators, is non-negative.
\begin{theorem} \label{thm:caseIvarbias}
The bias of the Neymanian sampling variance estimator given by (\ref{eqn:theta1_hat_var_hat}) is
\begin{equation}\label{eqn:theta1_hat_var_hat_bias}
E\left[\widehat{Var}(\widehat{\theta}_{A,1})\right] - Var(\widehat{\theta}_{A,1}) =\frac{1}{N(N-1)}\sum_{i=1}^N\left(\sum_{z_B}E[W_{i,B}(z_B)]\left(\theta_{i,A|z_B}-\theta_{A|z_B}\right)\right)^2. \nonumber
\end{equation}
\end{theorem}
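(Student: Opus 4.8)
The plan is to condition on the factor-$B$ assignment vector $\mathbf{W}_B$, thereby reducing the problem to the classical Neymanian analysis of a completely randomized treatment-control experiment on factor $A$, and then to uncondition using the law of total variance. Once $\mathbf{W}_B = \mathbf{w}_B$ is fixed, the quantities $Y_i^{\text{obs}}(z_A) := \sum_{z_B} w_{i,B}(z_B) Y_i(z_A, z_B)$ are ordinary numbers, $\widehat{\theta}_{A,1}$ is exactly the difference-in-means estimator for a completely randomized experiment on $A$ with these ``pseudo'' potential outcomes, and $\widehat{Var}(\widehat{\theta}_{A,1})$ is the matching Neymanian variance estimator built from the arm-level sample variances $s^2(z_A)$. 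Standard finite-population results --- unbiasedness of $s^2(z_A)$ for the arm-level finite-population variance, together with Neyman's variance identity --- then give, conditionally on $\mathbf{W}_B$,
\[
E\big[\widehat{Var}(\widehat{\theta}_{A,1}) \mid \mathbf{W}_B\big] - Var\big(\widehat{\theta}_{A,1} \mid \mathbf{W}_B\big) = \frac{S_D^2(\mathbf{W}_B)}{N},
\]
where $S_D^2(\mathbf{W}_B)$ denotes the finite-population variance of the pseudo unit-level effects $D_i := Y_i^{\text{obs}}(+1_A) - Y_i^{\text{obs}}(-1_A) = \sum_{z_B} W_{i,B}(z_B)\,\theta_{i,A|z_B}$.

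The next step is to average over $\mathbf{W}_B$. Since $E[\widehat{Var}(\widehat{\theta}_{A,1})] = E\big[E[\widehat{Var}(\widehat{\theta}_{A,1}) \mid \mathbf{W}_B]\big]$ and, by the law of total variance, $Var(\widehat{\theta}_{A,1}) = E\big[Var(\widehat{\theta}_{A,1}\mid\mathbf{W}_B)\big] + Var\big(E[\widehat{\theta}_{A,1}\mid\mathbf{W}_B]\big)$, subtracting yields
\[
E[\widehat{Var}(\widehat{\theta}_{A,1})] - Var(\widehat{\theta}_{A,1}) = \frac{1}{N}\,E\big[S_D^2(\mathbf{W}_B)\big] - Var(\bar{D}), \qquad \bar{D} := \frac{1}{N}\sum_{i=1}^N D_i = E[\widehat{\theta}_{A,1}\mid\mathbf{W}_B].
\]
Now I would use the Bernoulli structure of factor $B$: the $D_i$ are mutually independent, with $\mu_i := E[D_i] = \sum_{z_B} E[W_{i,B}(z_B)]\,\theta_{i,A|z_B}$ and $\bar{\mu} := N^{-1}\sum_i \mu_i = \sum_{z_B} E[W_{i,B}(z_B)]\,\theta_{A|z_B}$. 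Independence gives $Var(\bar{D}) = N^{-2}\sum_i Var(D_i)$; and writing $\sum_i (D_i - \bar{D})^2 = \sum_i (D_i - \bar{\mu})^2 - N(\bar{D} - \bar{\mu})^2$, taking expectations, and using $E[(D_i - \bar{\mu})^2] = Var(D_i) + (\mu_i - \bar{\mu})^2$ gives $E[S_D^2(\mathbf{W}_B)] = N^{-1}\sum_i Var(D_i) + (N-1)^{-1}\sum_i (\mu_i - \bar{\mu})^2$. Substituting into the displayed identity, the $\sum_i Var(D_i)$ terms cancel and leave
\[
E[\widehat{Var}(\widehat{\theta}_{A,1})] - Var(\widehat{\theta}_{A,1}) = \frac{1}{N(N-1)}\sum_{i=1}^N (\mu_i - \bar{\mu})^2 = \frac{1}{N(N-1)}\sum_{i=1}^N \Big(\sum_{z_B} E[W_{i,B}(z_B)]\big(\theta_{i,A|z_B} - \theta_{A|z_B}\big)\Big)^2,
\]
which is the claimed expression; it is manifestly nonnegative, and it equals $S_{A_w}^2/N$, consistent with the $-S_{A_w}^2/N$ term appearing in Theorem~\ref{thm:theta1_hat_var}.

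The main obstacle is the bookkeeping in the unconditioning step. Conditioning on $\mathbf{W}_B$ turns the ``potential outcomes'' of the induced $A$-experiment into random quantities, so the unconditional bias is \emph{not} simply $S_D^2/N$ for a deterministic $S_D^2$; one must carefully separate the within-unit variability $Var(D_i)$ from the between-unit dispersion of the means $\mu_i$ when evaluating $E[S_D^2(\mathbf{W}_B)]$. It is precisely the mutual independence of the $D_i$ (so that $Var(\bar D)$ carries no covariance terms) that forces the $\sum_i Var(D_i)$ contributions to cancel and produces the clean final form. The remaining ingredients --- the conditional Neyman identities and the elementary sum-of-squares decomposition --- are routine.
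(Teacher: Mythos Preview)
Your proposal is correct and shares the paper's key opening move --- condition on $\mathbf{W}_B$ so that, with the pseudo potential outcomes $Y_i^{\text{obs}}(z_A)=\sum_{z_B}W_{i,B}(z_B)Y_i(z_A,z_B)$ held fixed, the problem becomes a standard completely randomized $A$-experiment --- but from there the two arguments diverge. The paper computes $E[s^2(z_A)]$ explicitly: it writes out $E[s^2(z_A)\mid\mathbf{W}_B]$ as the conditional population variance, expands the square into a diagonal and a cross term, takes the unconditional expectation of each piece term-by-term (several pages of algebra), sums over $z_A$ to obtain $E[\widehat{Var}(\widehat{\theta}_{A,1})]$ in closed form, and finally subtracts the explicit expression for $Var(\widehat{\theta}_{A,1})$ established in Theorem~\ref{thm:theta1_hat_var}. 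Your route instead invokes the conditional Neyman bias identity $E[\widehat{Var}\mid\mathbf{W}_B]-Var(\widehat{\theta}_{A,1}\mid\mathbf{W}_B)=S_D^2(\mathbf{W}_B)/N$ directly, then unconditions via the law of total variance, so that the bias becomes $N^{-1}E[S_D^2(\mathbf{W}_B)]-Var(\bar D)$; the Bernoulli independence of the $D_i$ makes both pieces computable in a few lines, and the $\sum_i Var(D_i)$ contributions cancel cleanly. Your approach is shorter and more conceptual, and it never needs the explicit variance formula of Theorem~\ref{thm:theta1_hat_var}; the paper's approach, while heavier, has the incidental benefit of producing an independent closed form for $E[\widehat{Var}(\widehat{\theta}_{A,1})]$ along the way. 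The paper's closing remark --- that the bias is ``exactly the bias of the standard Neyman variance estimator if we ran an experiment with $X_i(z_A)$ as potential outcomes'' --- is precisely the observation your argument exploits from the outset.
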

The proof can be found in Section~\ref{supp:sec_2} of the supplementary material.
It is evident that the bias would vanish if we have
$\theta_{i,A|z_B}=\theta_{A|z_B}$ for all $i=1, \ldots, N$ and $z_B = -1_B, +1_B$, which occurs if the conditional effect of $A$ when $B$ is fixed at a particular level is the same for each unit. Note that this condition is weaker than strict additivity or no unit-level interactions.

Assuming asymptotic normality of the estimator $\widehat{\theta}_A$ \citep[see][for simple conditions for this to hold]{LiDing2017}, an approximate confidence interval for the main effect $\theta_A$ can be obtained as
$$ \widehat{\theta}_{A,1} \pm z_{\alpha/2} \sqrt{\widehat{Var}\left(\widehat{\theta}_{A, 1}\right)}, $$
where $\Phi(z_{\alpha/2})=1-\alpha/2$ and $\Phi(\cdot)$ is the cumulative distribution function (CDF) of the standard normal distribution, with $\alpha \in \{0,1\}$ defined so that the interval is a $100(1-\alpha)\%$ large sample confidence interval.


\section{Inference on the main effect of factor $A$ when assignment allocation of factor $B$ is available}\label{sec:caseII}

Next we explore an estimator for the treatment effect of factor $A$ for situation II, in which the assignment information of factor B is known.
Throughout this section we assume that $\Nbz >0$ for all $\bz$. Therefore, we always condition on $\Nbz >0$ in our exploration of situation II. We assume that the assignment mechanism for factor $A$ is independent of the assignment mechanism for factor $B$ conditional on at least one unit being assigned to each possible treatment.

Recall that in the case where the assignment of factor $B$ is unknown, $W_{i,A}(z_A)$ is independent of $W_{i,B}(z_B)$. However, if we condition on $\Nzazb$, these random variables are no longer independent. Recall from the last paragraph of Section \ref{sec:am} that the joint distribution of $\Nzazb$'s are independent Bernoulli distributed random variables. Using this fact along with the joint distribution of the assignment vector $(\mathbf{W}_A, \mathbf{W}_B)$ given by (\ref{eq:joint_assignment}), it is straightforward to see that if we condition on $\Nzazb$, we can analyze the experiment as if it were an unbalanced, completely randomized experiment, rather than independent assignment. When we remove the conditioning on $\Nzazb$, we then take into account the additional uncertainty as to the number of units assigned to each treatment group.

In this case, we can obtain an estimator of $\theta_A$ by plugging in estimators of $\bar{Y}(\bz)$ for $\bz \in \mathbb{Z}$ in the numerator of (\ref{eqn:thetaA_dot1}), because knowledge of the assignment of units to levels of $B$ makes each of these four terms estimable. A natural estimator of  $\bar{Y}(\bz)$ is  $\bar{Y}^{\text{obs}}(\bz)$ defined in (\ref{eqn:ybarobs}). Thus, we define the following estimator:

\begin{equation}
\widehat{\theta}_{A, 2}
= \frac{\bar{Y}^{\text{obs}}(+1_{A}, +1_{B})- \bar{Y}^{\text{obs}}(-1_{A}, +1_{B}) + \bar{Y}^{\text{obs}}(+1_{A}, -1_{B})  - \bar{Y}^{\text{obs}}(-1_{A}, -1_{B})}{2}. \nonumber 
\end{equation}

The sampling properties of this estimator are summarized in the following two results.

\begin{proposition}[Expectation of $\widehat{\theta}_{A, 2}$]
\label{thm:theta2_hat_exp_bias}
The estimator $\widehat{\theta}_{A, 2}$ is an unbiased estimator of $\theta_A$.
\end{proposition}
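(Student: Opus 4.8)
The plan is to show $E[\widehat{\theta}_{A,2}] = \theta_A$ by establishing that each of the four terms $\bar{Y}^{\text{obs}}(\bz)$ is conditionally unbiased for the corresponding $\bar{Y}(\bz)$, and then invoking linearity together with the representation of $\theta_A$ in (\ref{eqn:thetaA_dot1}). Since we are working in situation II, everything is conditional on the event $\{\Nbz > 0 \text{ for all } \bz\}$; I would first use the observation made in the text that, conditional on the vector $(\Nplusplus, \Nplusminus, \Nminusplus, \Nminusminus)$, the assignment behaves like an unbalanced completely randomized experiment, so that within each treatment group the $\Nbz$ units receiving $\bz$ are a simple random sample (without replacement) of the $N$ units.

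The key steps, in order, would be: (i) Fix a treatment $\bz = (z_A, z_B)$ and condition on $\Nbz = n$ for some $n \ge 1$. By the simple-random-sample structure, $E\!\left[W_i(\bz) \mid \Nbz = n\right] = n/N$ for every $i$, hence
\begin{align*}
E\!\left[\bar{Y}^{\text{obs}}(\bz) \,\middle|\, \Nbz = n\right]
= \frac{1}{n}\sum_{i=1}^N E\!\left[W_i(\bz)\,\middle|\,\Nbz = n\right] Y_i(\bz)
= \frac{1}{n}\sum_{i=1}^N \frac{n}{N} Y_i(\bz) = \bar{Y}(\bz).
\end{align*}
(ii) Since this conditional expectation does not depend on $n$, the tower property gives $E\!\left[\bar{Y}^{\text{obs}}(\bz) \,\middle|\, \Nbz > 0\right] = \bar{Y}(\bz)$, and this holds for each of the four combinations $\bz$. (iii) Take the appropriate signed linear combination: by linearity of (conditional) expectation,
\begin{align*}
E\!\left[\widehat{\theta}_{A,2}\right]
= \frac{\bar{Y}(+1_A,+1_B) - \bar{Y}(-1_A,+1_B) + \bar{Y}(+1_A,-1_B) - \bar{Y}(-1_A,-1_B)}{2}
= \theta_A,
\end{align*}
where the last equality is exactly (\ref{eqn:thetaA_dot1}).

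One subtlety worth being careful about is that the conditioning event in situation II is $\{\Nbz > 0 \text{ for all four } \bz\}$ jointly, not just $\{\Nbz > 0\}$ for a single $\bz$; however, since the four $\Nbz$'s are the cell counts of two independent Bernoulli experiments and, conditional on all cell counts, each $\bar{Y}^{\text{obs}}(\bz)$ depends only on which units fell in cell $\bz$, the argument in step (i) goes through verbatim with $n$ replaced by the full count vector, and the conditional expectation still equals $\bar{Y}(\bz)$ regardless of the count vector (as long as it is coordinatewise positive). So the tower property over the conditional distribution of the count vector given positivity again removes the conditioning cleanly.

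The main obstacle is really just bookkeeping around the conditioning: making precise that conditioning on the cell-count vector reduces the assignment to independent simple random samples within the $+1_A$ and $-1_A$ arms (this is the content of combining (\ref{eq:joint_assignment}) with the independent-Bernoulli fact), and checking that $E[W_i(\bz)\mid \text{counts}] = \Nbz/N$ uniformly in $i$. Once that structural fact is in hand, the unbiasedness is immediate from linearity and (\ref{eqn:thetaA_dot1}); no variance computation or delicate estimate is needed.
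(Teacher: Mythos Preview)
Your proposal is correct and follows essentially the same route as the paper: the paper's argument (embedded in the proof of Lemma~\ref{lem:yobs2_prop}) is precisely to write $E[\bar{Y}^{\text{obs}}(\bz)\mid N_{\bz}>0] = E\big[E[\bar{Y}^{\text{obs}}(\bz)\mid N_{\bz},\,N_{\bz}>0]\mid N_{\bz}>0\big] = \bar{Y}(\bz)$ via the completely-randomized structure conditional on counts, then apply linearity and (\ref{eqn:thetaA_dot1}). Your treatment of the joint positivity conditioning is, if anything, more careful than the paper's own statement of this step.
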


\medskip

\begin{theorem}[Variance of $\widehat{\theta}_{A, 2}$]
\label{thm:theta2_hat_var}
The variance of $\widehat{\theta}_{A, 2}$ is
\begin{align}\label{eqn:theta2_hat_var_rewrite}
Var\left(\widehat{\theta}_{A, 2}\Big|N_{\bz}>0 \text{ }\forall \bz\right)&=
\sum_{\bz} E\left[\frac{1}{4N_\bz}\Big|N_{\bz}>0\right]S^2(\bz) -\frac{1}{N}S^2_A.
\end{align}
\end{theorem}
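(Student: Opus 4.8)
The plan is to prove Theorem~\ref{thm:theta2_hat_var} by conditioning on the full vector of cell counts $\{\Nbz : \bz \in \mathds{Z}\}$ and applying the law of total variance, which reduces the computation to the already-known sampling variance of an estimated factorial effect in a completely randomized $2^2$ design. Throughout, let $\mathcal E$ denote the event $\{\Nbz > 0 \text{ for all } \bz\}$, on which everything below is conditioned.

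First I would fix a realization of $\{\Nbz\}$ with all entries positive and work conditionally on it. As established in the discussion at the start of Section~\ref{sec:caseII} (using the conditional independence of the $A$- and $B$-assignments given $\mathcal E$, together with the fact that the $\Nbz$ arise from two independent binomials), conditioning on $\{\Nbz\}$ makes the joint allocation of the $N$ units to the four treatment combinations uniform over all allocations with those cell sizes; that is, the design conditional on $\{\Nbz\}$ is an unbalanced, completely randomized $2^2$ factorial experiment. For such a design I would invoke the standard results \citep{Dasgupta2015, lu2016randomization}: each $\bar{Y}^{\text{obs}}(\bz)$ is conditionally unbiased for $\bar{Y}(\bz)$, so by (\ref{eqn:thetaA_dot1}) and linearity $E[\widehat{\theta}_{A,2} \mid \{\Nbz\}] = \theta_A$; and the conditional sampling variance is
\[
Var\big(\widehat{\theta}_{A,2} \mid \{\Nbz\}\big) \;=\; \sum_{\bz} \frac{S^2(\bz)}{4\Nbz} \;-\; \frac{S^2_A}{N},
\]
the factor $1/4$ being $1/2^{2(K-1)}$ with $K=2$ and $S^2_A$ the finite-population variance of the unit-level main effects $\theta_{i,A}$. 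If one prefers not to quote this, it can be re-derived in a couple of lines by writing $\widehat{\theta}_{A,2} = \tfrac12\sum_{\bz} c_{\bz}\bar{Y}^{\text{obs}}(\bz)$ with $c_{\bz} = \pm 1$ according to the level of $A$, and using that, under complete randomization, $Var(\bar{Y}^{\text{obs}}(\bz)\mid\{\Nbz\}) = S^2(\bz)/\Nbz - S^2(\bz)/N$ and $Cov(\bar{Y}^{\text{obs}}(\bz),\bar{Y}^{\text{obs}}(\bz^\star)\mid\{\Nbz\}) = -S(\bz,\bz^\star)/N$ for $\bz \ne \bz^\star$, after which the cross terms collapse via $\sum_{\bz} c_{\bz}(Y_i(\bz) - \bar{Y}(\bz)) = 2(\theta_{i,A} - \theta_A)$.

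Next I would assemble the two pieces with the law of total variance, taking the outer expectation over $\{\Nbz\}$ given $\mathcal E$:
\[
Var\big(\widehat{\theta}_{A,2} \mid \mathcal E\big) \;=\; E\big[Var(\widehat{\theta}_{A,2}\mid\{\Nbz\}) \mid \mathcal E\big] \;+\; Var\big(E[\widehat{\theta}_{A,2}\mid\{\Nbz\}] \mid \mathcal E\big).
\]
Since $E[\widehat{\theta}_{A,2}\mid\{\Nbz\}] = \theta_A$ does not depend on $\{\Nbz\}$, the second term is the variance of a constant and hence $0$ (this also re-establishes Proposition~\ref{thm:theta2_hat_exp_bias}). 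For the first term, $S^2(\bz)$, $S^2_A$ and $N$ are non-random, so linearity of expectation gives
\[
Var\big(\widehat{\theta}_{A,2} \mid \mathcal E\big) \;=\; \sum_{\bz} E\!\left[\frac{1}{4\Nbz}\,\Big|\,\mathcal E\right] S^2(\bz) \;-\; \frac{S^2_A}{N},
\]
which is exactly the claimed identity once the conditioning event $\mathcal E$ is written, as in the theorem statement, as $\{\Nbz > 0\}$.

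The routine part is the law-of-total-variance bookkeeping; the step that needs care is the first one — confirming that conditioning on the cell counts genuinely turns the ``independent Bernoulli for $B$, complete randomization for $A$'' mechanism into a single completely randomized $2^2$ allocation, and matching the scaling of factorial effects and the definition of $S^2_A$ used here to those in \citet{Dasgupta2015} and \citet{lu2016randomization}. A secondary point worth a sentence is that the known factorial variance formula holds for arbitrary positive cell sizes (it is only variance \emph{estimation}, not the variance itself, that runs into trouble when some $\Nbz = 1$), so nothing is lost by conditioning on $\mathcal E$.
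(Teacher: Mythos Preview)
Your proof is correct and takes a cleaner route than the paper's. The paper first establishes Lemma~\ref{lem:yobs2_prop}, computing $Var(\bar Y^{\text{obs}}(\bz)\mid N_{\bz}>0)$ by a law-of-total-variance argument applied to each $\bar Y^{\text{obs}}(\bz)$ individually, and obtaining $Cov(\bar Y^{\text{obs}}(\bz),\bar Y^{\text{obs}}(\bz^\star)\mid\cdot)$ by directly evaluating the pairwise moments $E\!\big[N_{\bz}^{-1}N_{\bz^\star}^{-1}W_i(\bz)W_{i'}(\bz^\star)\big]$; only after assembling all of these pieces does it algebraically recognize and extract the $S^2_A/N$ term. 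You instead condition on the whole vector $\{N_{\bz}\}$ at once, which lets you invoke (or re-derive in two lines) the completely randomized factorial variance formula in one stroke; the deconditioning is then trivial because the conditional mean is the constant $\theta_A$. Your approach is shorter and makes the connection to \citet{lu2016randomization} explicit, whereas the paper's term-by-term computation is more self-contained and exposes the covariance structure of the $\bar Y^{\text{obs}}(\bz)$ directly, which is of independent use (e.g., for Lemma~\ref{lem:yobs2_prop} itself).
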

For proof of this result, see Section~\ref{supp:caseII} of the supplementary material.

\begin{remark}[Discussion on the sampling properties of $\widehat{\theta}_{A,2}$]

Proposition~\ref{thm:theta2_hat_exp_bias} implies that when the actual assignment of units to levels of factor $B$ is known, we can unbiasedly estimate $\theta_{A}$, even without knowing the exact assignment mechanism of factor $B$. This result illustrates how important it is to record all relevant information when running an experiment. The sampling variance result is equivalent to that found by 
\citet{lu2016randomization} for unbalanced $2^K$ factorial designs (which is the design being used conditional on $N_{\bz}$) and generalizes the result found in \cite{Dasgupta2015} for balanced $2^2$ factorial experiments to the case of random $N_{\bz}$'s resulting in possibly unbalanced designs.

\end{remark}

As in situation I, we explore the properties of the sampling variance of the estimator $\widehat{\theta}_{A,2}$ under strict additivity. Looking back at (\ref{eqn:s2a}), we have $S^2_A = 0$ as all $\theta_{i, A}$ are the same ($\theta_{i, A} = \theta_A$, $i=1, \ldots, N$). Furthermore, under strict additivity, all of the $S^2(\bz)$ are the same, so we can denote the common finite-population variance of potential outcomes under any treatment $\bz$ as $S^2$. Hence, under strict additivity, (\ref{eqn:theta2_hat_var_rewrite}) reduces to
\begin{equation}
Var\left(\widehat{\theta}_{A, 2}\Big|N_{\bz}>0 \text{ }\forall \bz\right)=
\sum_{\bz} E\left[\frac{1}{4N_\bz}\Big|N_{\bz}>0\right]S^2. \nonumber
\end{equation}

\subsection{Estimation of sampling variance and interval estimation of main effect of $A$}

Next, we consider the problem of estimating the quantity $Var\left(\widehat{\theta}_{A, 2}\right)$. 
From the expression of variance derived in (\ref{eqn:theta2_hat_var_rewrite}), and noting that $S^2_A$ is not estimable without assumptions like strict additivity, we can create a conservative estimator by plugging in estimators of $S^2(\bz)$ and $E\left(N_{\bz}^{-1}| N_{\bz} >0 \right)$ in (\ref{eqn:theta2_hat_var_rewrite}). Instead of estimating $E\left(N_{\bz}^{-1}| N_{\bz} >0 \right)$, the most straightforward estimator for the sampling variance of $\widehat{\theta}_{A, 2}$ would be the one obtained by conditioning on the observed number of units assigned to each treatment level, $\Nbz$. Thus, we propose the variance estimator, also used by \citet{lu2016randomization},
\begin{equation}
\widehat{Var}\left(\widehat{\theta}_{A, 2}\right)=\sum_{\bz} \frac{1}{4\Nbz}s^2(\bz), \label{eqn:var_edt_caseII}
\end{equation}
where
$$ s^2(\bz) = \frac{1}{\Nbz - 1}\sum_{i:W_i(\bz) = 1}\left[Y_i^{\text{obs}} - \bar{Y}^{\text{obs}}(\bz)\right]^2$$
is the sample variance of the observed outcomes for treatment combination $\bz$, and is an unbiased estimator of $S^2(\bz)$ conditional on $N_{\bz} > 0$ for all $\bz$.
For this estimator to actually be defined, we require $N_{\bz} \geq 2$ for all $\bz$.

This estimation procedure is analogous to that in an experimental setup where the design is a Bernoulli experiment with a single factor, but one analyzes the experiment as if it were completely randomized by conditioning on the number of treated units. If we have a method that creates valid confidence intervals (i.e. with correct frequentist coverage) conditional on the number of treated units, then we will get (unconditional) valid confidence intervals over the original assignment mechanism using this method \citep{pashley2020conditional}.
See also \cite{branson2019randomization} and \cite{hennessy2016conditional} for exploration of related conditioning ideas under Fisherian inference.

In Section~\ref{supp:alt_var_est} of the supplementary material, we explore an alternative variance estimation strategy which aims to estimate the overall variance of $\widehat{\theta}_{A, 2}$, not conditional on $\Nbz$ itself, though still conditioning on $\Nbz>0$.

\section{Simulation Results}
\label{sec:sim}

We compare the performance of the estimators for situation I and situation II in numerical and simulation studies.
We compare three potential outcome models: (1) strict additivity with $\theta_{AB} = 0$, (2) strict additivity with $\theta_{AB} \neq 0$, and (3) moderately, positively correlated potential outcomes, such that each pair of potential outcomes has a fixed positive correlation.

When assuming strict additivity, we generate $\epsilon_i \iidsim N(0,\sigma^2)$, and then set the potential outcomes as
\begin{align*}
Y_i(-1_A,-1_B) &= \epsilon_i + \theta_{AB}^*\\
Y_i(+1_A,-1_B) &= \epsilon_i + \theta_A^*\\
Y_i(-1_A,+1_B) &= \epsilon_i + \theta_B^*\\
Y_i(+1_A, +1_B) &= \epsilon_i + \theta_A^* + \theta_B^* + \theta_{AB}^*,
\end{align*}
where the $\theta_A^*, \theta_B^*$, and $\theta_{AB}^*$ are hypothetical population parameters, as opposed to $\theta_A, \theta_B$, and $\theta_{AB}$ which are realized finite population parameters based on one draw from the population.

When assuming positive correlation, we draw the potential outcomes according to the following model:
\begin{align*}
\begin{pmatrix}
Y_i(-1_A,-1_A) 	\\ Y_i(+1_A, -1_B) 	\\ Y_i(-1_A,+1_B) 	\\ Y_i(+1_A,+1_B) \\
\end{pmatrix} \sim N_4\left(
\begin{pmatrix}
\theta_{AB}^* 	\\ \theta_A^* 		\\ \theta_B^* 			\\ \theta_A^* + \theta_B^* + \theta_{AB}^* \\
\end{pmatrix},
\begin{pmatrix}
\sigma^2 		& \rho \sigma^2 	& \rho \sigma^2 		& \rho \sigma^2 \\
\rho \sigma^2 	& \sigma^2 			& \rho \sigma^2 		& \rho \sigma^2 \\
\rho \sigma^2 	& \rho \sigma^2 	& \sigma^2 			& \rho \sigma^2 \\
\rho \sigma^2 	& \rho \sigma^2 	& \rho \sigma^2 		& \sigma^2 \\
\end{pmatrix}
\right)
\end{align*}

We assume $N = 100$, $N_{+1_A \cdot} = N_{-1_A \cdot} = 50$, $\sigma^2 = 1$, and $\rho = 0.4$.
The simulation is structured as follows:

\begin{enumerate}
\item Fix a potential outcome model, and draw the potential outcomes $\mathbf{Y}$. 
\item Calculate the true finite sample factorial effect of factor A, $\theta_A$.
\item For each value $\pi_B \in (0.05, 0.1, \dots, 0.9, 0.95)$:
	\begin{enumerate}
		\item Generate $1,000$ different assignment vectors $\mathbf{W}$.
		For situation II, any assignment vectors with any $\Nbz \leq 2$ were rejected so that all estimators are well-defined.
		\item Calculate point estimates $\widehat{\theta}_A$, expected values $E\left[\widehat{\theta}_{A}\right]$, and sampling variance estimate $\widehat{Var}\left(\widehat{\theta}_{A}\right)$ for each $\mathbf{W}$.
		For situation I, we use the sampling variance estimator $\widehat{Var}\left(\widehat{\theta}_{A, 1}\right)$ (Equation~\ref{eqn:theta1_hat_var_hat}). 
		For situation II, we use the sampling variance estimator $\widehat{Var}\left(\widehat{\theta}_{A,2}\right)$ (Equation~\ref{eqn:var_edt_caseII}).
	\end{enumerate}
\item Evaluate performance.
For each of the following quantities, the mean over the $1,000$ assignment vectors is reported for each value of $\pi_B$:
	\begin{enumerate}
		\item Coverage: $\mathds{1}\left(\theta_A \in \widehat{\theta}_A \pm 1.96\sqrt{\widehat{Var}\left(\widehat{\theta}_{A}\right)}\right)$, where $\mathds{1}(x)$ is an indicator variable for the condition $x$.
		\item Interval width: $2 \times 1.96\sqrt{\widehat{Var}\left(\widehat{\theta}_{A}\right)}$.
	\end{enumerate}
\end{enumerate}

We note two features of the simulations.
First, the randomness in the simulations comes solely from the assignment vector; the potential outcomes are only generated three times, one for each potential outcome model.
Second, when calculating $E\left[\widehat{\theta}_{A}\right]$ and $Var\left(\widehat{\theta}_{A}\right)$, we use the numerical equations outlined in this paper, rather than empirical estimates of expectation and variance.

In a model with strict additivity and no interaction effect, the situation I and situation II estimators of $\theta_A$ show similar performances (Figure \ref{fig:sim}).
The data generating values are $\theta_{AB}^* = 0$, $\theta_A^* = 2$, and $\theta_B^* = 2$, although the finite-population values differ slightly from these data-generating values due to the errors $\epsilon_i$ generated.
The situation I and situation II confidence interval methods have similar coverage.
However, the interval width using the situation I estimators is smaller at more extreme values of $\pi_B$ (close to $0$ or $1$).
If most units have the same factor $B$ assignment, then $\bar{Y}(z_A,\cdot)$ values will vary less.
In contrast, at extreme values of $\pi_B$, the situation II estimator becomes unstable because very small groups of units are being used in the sampling variance estimator.
Thus the coverage dips below 95\% at extreme values of $\pi_B$ for situation II because of these unstable sampling variance estimates.
Because situation I estimators never rely on information from factor $B$, and thus they pool these small groups into larger groups, they do not suffer from this instability.
When we assume a nonzero interaction effect, the situation I estimator performs poorly in comparison to the situation II estimator.
The data generating values are $\theta_{AB}^* = 2$, $\theta_A^* = 2$, and $\theta_B^* = 2$.
The coverage for situation I is very low except when $\pi_B =  0.5$, consistent with the result that the estimator is only unbiased under this condition.
Additionally, the situation I estimator has larger average interval widths than in the first simulation, whereas the situation II estimator interval widths remain the same.
The larger interval widths could be due to the interaction term resulting in more variability in the potential outcomes under different treatment combinations.
The situation II estimator is unaffected by the nonzero interaction term because the variances for the potential outcomes are calculated separately for each treatment combination, while in situation I treatment combinations with the same factor $B$ assignment are grouped together. 

Relaxing strict additivity results in similar performance to situations with strict additivity and a nonzero interaction.
In the third model, rather than strict additivity, we assume the potential outcomes are positively correlated, with any pair of outcomes $Y_i(\bz)$ and $Y_i(\bz^\star)$ having a correlation of $\rho = 0.4$, and the same data-generating values $\theta_{AB}^* = 2$, $\theta_A^* = 2$, and $\theta_B^* = 2$.
This simulation setup slightly breaks the assumption of strict additivity, although by having positive correlation between each pair of outcomes, we only introduce a small amount of variance among the differences of the unit-level potential outcomes $Y_i(\bz) - Y_i(\bz^\star)$.
Without strict additivity, our sampling variance estimators are biased and overestimate the true sampling variance, which is shown by the slight overcoverage.
However, the simulation results without strict additivity are generally very similar to the simulation results with strict additivity and an interaction, so we find that slightly breaking the assumption of strict additivity has only a minor impact on empirical performance of the estimators.

For additional simulation results, including mean squared error (MSE), average relative bias, and average relative bias of the variance estimator, see Section~\ref{sec:sim_add} of the supplementary material.

\section{Discussion}
We have explored the setting in which a researcher has two treatments of interest (factor $A$ and factor $B$), but is in control of the assignment mechanism of only one factor (factor $A$).
We considered two situations: one in which the experimenter does not know the assignment of factor $B$ (a violation of SUTVA) and one in which the experimenter does know the assignment of factor $B$.
In the situation where the assignment of factor $B$ is unknown, the estimator for the main effect of factor $A$ is the simple difference in means, which is only unbiased if the assignment of factor $B$ is balanced (i.e., each unit has equal probability of being assigned to either level of factor $B$) or there is no interaction between factor $A$ and factor $B$.
In the situation where the assignment of factor $B$ is known, the usual factorial estimator for the main effect of factor $A$ is unbiased even in the aforementioned contexts.
We further proposed sampling variance estimators for both estimators.

We conducted a numerical simulation to illustrate the properties of these estimators.
In the situation with unknown factor $B$ assignment, a balanced factor $B$ assignment mechanism results in zero bias, but the highest sampling variance because $\bar{Y}(z_A, \cdot)$ is an even mixture of those assigned to levels $+1_B$ and $-1_B$ of factor $B$.
In the situation with known factor $B$ assignment, the opposite holds true for the sampling variance; the sampling variance in that situation is minimized with a balanced factor $B$ assignment mechanism.

The work in this paper points to many avenues of future exploration.
For example, we could extend this work to a context in which the probability of being assigned to factor $B$ is not constant, but instead depends on characteristics of the units.
Such an assignment could be more realistic.
In the situation where assignment to factor $B$ is observed, we could then use methods such as propensity score models to build better estimates of treatment effects.
Additionally, it is of interest to explore a setting in which we have $K$ treatments and multiple blocks, with each block randomizing a different subset of the $K$ treatments, which more closely resembles the setting of \citet{Coronis2013}.
In this setup, a Bayesian hierarchical model might be useful to share information across sites.

\section{Software}
\label{sec:software}

We include simulation code in a GitHub repository at \url{https://github.com/kristenbhunter/hidden_treat}.

\bibliographystyle{apalike}
\bibliography{julius}

\begin{figure}[!p]
\centering
  \includegraphics[width=6.5in]{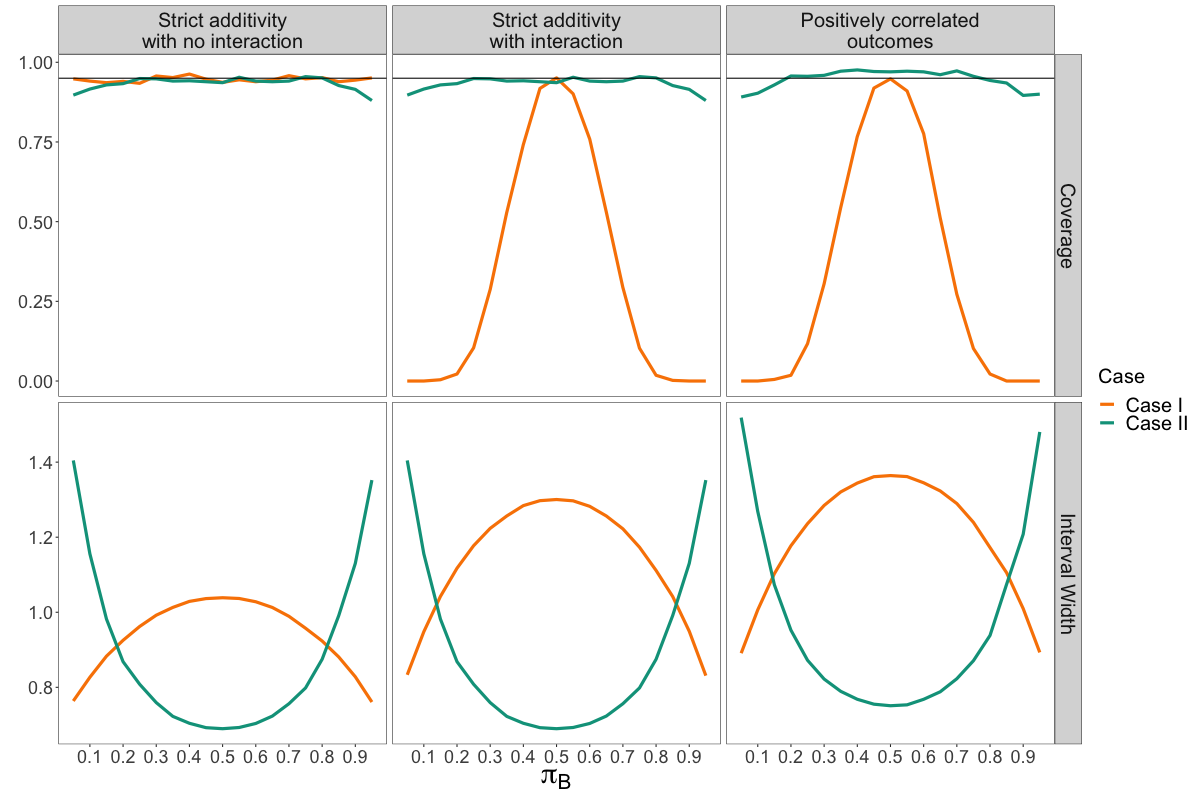}
  \caption{\label{fig:sim}Comparison of performances of situation I and situation II estimators}
\end{figure}

\begin{appendices}
\renewcommand\appendixname{Supplementary Material}
\section{Alternative sampling variance estimator}\label{supp:alt_var_est}

Instead of using variance estimator (\ref{eqn:var_edt_caseII}), in which $E[N_{\bz}^{-1}|N_{\bz}>0]$ is estimated by $1/\Nbz^{\text{obs}}$, we can derive an exact expression for $E[N_{\bz}^{-1}|N_{\bz}>0]$ and obtain its estimator. Note that $N_{+1_A \cdot}$ and $N_{-1_A \cdot}$ are \emph{not} random variables because they are fixed by the experimenter. However, conditional on $N_{\bz} >0$ for all $\bz$, $N_{z_A,z_B}$ is a truncated binomial random variable with parameters $\left(N_{z_A\cdot},\pi_B \right)$ taking values $1, \ldots, (N_{z_A\cdot}~-1)$. The following result provides an expression for $E[N_{\bz}^{-1}|N_{\bz}>0]$:

\begin{lemma} \label{lem:mean_of_Nz_inv}
For $z_A \in \{-1_A, +1_A\}$ and $z_B \in \{-1_B, +1_B\}$,
\begin{align}
    E\left[\frac{1}{N_{z_A, z_B}}|N_{\bz}>0\right] &= \frac{1}{1-\pi_B^{\Nzadot} - (1-\pi_B)^{\Nzadot}}\sum_{n=1}^{\Nzadot-1}\frac{1}{n}{\Nzadot \choose n}E[W_{i,B}(z_B)]^n(1-E[W_{i,B}(z_B)])^{\Nzadot - n} \label{eqn:mean_of_Nz_inv}
\end{align}
\end{lemma}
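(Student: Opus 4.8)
The plan is to compute the conditional expectation directly from the definition, using the fact that conditional on $N_{\bz}>0$ for all $\bz$, the random variable $N_{z_A,z_B}$ follows a binomial distribution with parameters $(\Nzadot,\, E[W_{i,B}(z_B)])$ truncated to the set $\{1,\dots,\Nzadot-1\}$. First I would observe that $N_{z_A,z_B}$ depends only on the assignment of factor $B$ among the $\Nzadot$ units at level $z_A$ of factor $A$, since $N_{z_A,z_B}=\sum_{i:W_{i,A}(z_A)=1}W_{i,B}(z_B)$, and these $W_{i,B}(z_B)$ are i.i.d.\ Bernoulli with success probability $E[W_{i,B}(z_B)]$ (which equals $\pi_B$ if $z_B=+1_B$ and $1-\pi_B$ if $z_B=-1_B$). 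Marginally, $N_{z_A,z_B}\sim\text{Bin}(\Nzadot, E[W_{i,B}(z_B)])$, so $P(N_{z_A,z_B}=n)=\binom{\Nzadot}{n}E[W_{i,B}(z_B)]^n(1-E[W_{i,B}(z_B)])^{\Nzadot-n}$.

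Next I would handle the conditioning event. The event $\{N_{\bz}>0\ \forall\bz\}$ factors across the two levels of $A$ by independence of the two factor-$B$ sub-assignments, and within the level-$z_A$ group it is exactly the event $\{1\le N_{z_A,+1_B}\le \Nzadot-1\}$, equivalently $\{1\le N_{z_A,z_B}\le\Nzadot-1\}$ for either choice of $z_B$ (since $N_{z_A,+1_B}+N_{z_A,-1_B}=\Nzadot$). The probability of this event is $1-P(N_{z_A,z_B}=0)-P(N_{z_A,z_B}=\Nzadot)=1-(1-E[W_{i,B}(z_B)])^{\Nzadot}-E[W_{i,B}(z_B)]^{\Nzadot}$, and by symmetry $(1-\pi_B)^{\Nzadot}+\pi_B^{\Nzadot}$ regardless of which $z_B$ we use, giving the $1-\pi_B^{\Nzadot}-(1-\pi_B)^{\Nzadot}$ in the denominator. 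Crucially, because $N_{z_A,z_B}$ is independent of the factor-$B$ assignment in the \emph{other} $A$-group, conditioning on the full event $\{N_{\bz}>0\ \forall\bz\}$ is the same as conditioning only on $\{1\le N_{z_A,z_B}\le\Nzadot-1\}$ for the purpose of computing $E[1/N_{z_A,z_B}\mid\cdot]$. Then
\[
E\!\left[\frac{1}{N_{z_A,z_B}}\,\Big|\,N_{\bz}>0\right]
=\frac{\sum_{n=1}^{\Nzadot-1}\tfrac1n\,P(N_{z_A,z_B}=n)}{P(1\le N_{z_A,z_B}\le\Nzadot-1)},
\]
and substituting the binomial pmf and the denominator computed above yields exactly \eqref{eqn:mean_of_Nz_inv}.

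The only subtle point — and the part I would take care to justify cleanly — is the independence argument that lets me replace the conditioning event $\{N_{\bz}>0\ \forall\bz\}$ by the single-group event $\{1\le N_{z_A,z_B}\le\Nzadot-1\}$. This follows because the vector $(W_{i,B}(z_B))_{i:W_{i,A}(z_A)=1}$ determining $N_{z_A,z_B}$ is, under the completely-randomized assignment of $A$ together with independent Bernoulli assignment of $B$, independent of the analogous vector for the complementary $A$-group, so the joint conditioning event is a product event and the irrelevant factor drops out of the conditional expectation. Everything after that is routine: it is just writing out the truncated-binomial expectation of $1/n$. I do not expect a genuine obstacle here; the ``hard part,'' such as it is, is simply bookkeeping the truncation endpoints and confirming the denominator is symmetric in $\pi_B\leftrightarrow1-\pi_B$ so that it can be written in the stated form.
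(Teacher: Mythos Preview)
Your proposal is correct and follows essentially the same approach as the paper: identify the marginal distribution of $N_{z_A,z_B}$ as $\text{Bin}(\Nzadot,E[W_{i,B}(z_B)])$, reduce the conditioning event $\{N_{\bz}>0\ \forall\bz\}$ to the single-group event $\{1\le N_{z_A,z_B}\le\Nzadot-1\}$ via independence across $A$-groups, and write out the truncated-binomial expectation. If anything, you are more explicit than the paper about justifying why the other $A$-group's constraint drops out of the conditioning; the paper's proof passes silently from $P(N_{\bz}>0)$ to $P(N_{z_A,z_B}>0,\,N_{z_A,-z_B}>0)$ via the same cancellation you spell out.
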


A reasonable way to estimate $\pi_B$ is to pool the information from the units where $z_A = +1_A$ and $z_A = -1_A$, to get $\widehat{\pi}_B = \frac{n_{\cdot +1_B}}{N}$, where $n_{\cdot +1_B}$ is the observed number of units assigned to level $+1_B$ of $B$.  We can then substitute the estimator $\widehat{\pi}_B$ in (\ref{eqn:mean_of_Nz_inv}) and manually solve the sum to get an estimator of the expected value of $1/\Nbz$, and consequently the following plug-in estimator for the sampling variance of the estimator $\widehat{\theta}_{A, 2}$,
\begin{equation}\label{eqn:var_2_caseII}
\widehat{Var}_2\left(\widehat{\theta}_{A, 2}\right) = \left[ \sum_{\bz} \frac{\widehat{E}[1/N_{\bz}]}{4}  s^2(\bz)\right].
\end{equation}

This estimator is biased.
This bias is the price we pay for not knowing $\pi_B$ and having to estimate $E[N_{\bz}^{-1} | N_{\bz} >0]$, a non-linear function of $\pi_B$.
 However, as we shall see in our simulations, the bias does not have severe adverse effects on the coverage of the asymptotic confidence intervals for $\theta_A$ that can be generated as described in Section \ref{sec:caseIvarest}, even for moderately small population sizes.
 
\newpage
\section{Additional simulation results}
\label{sec:sim_add}
 
\begin{figure}[h!]
\centering
  \includegraphics[width=5in]{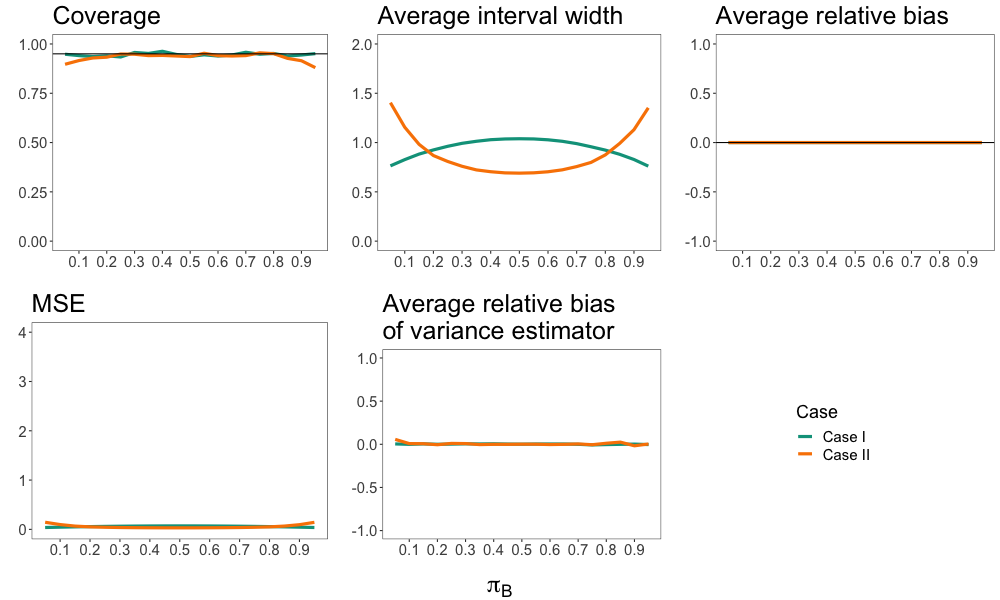}
  \caption{Strict additivity with no interaction}
\end{figure}

\begin{figure}[h!]
\centering
  \includegraphics[width=5in]{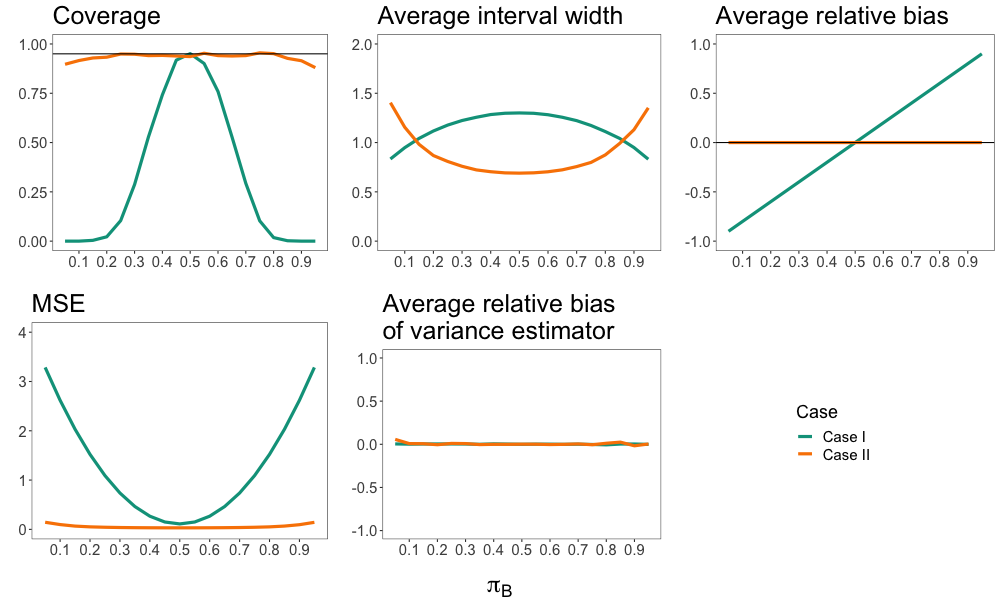}
  \caption{Strict additivity with interaction}
\end{figure}

\begin{figure}[h!]
\centering
  \includegraphics[width=5in]{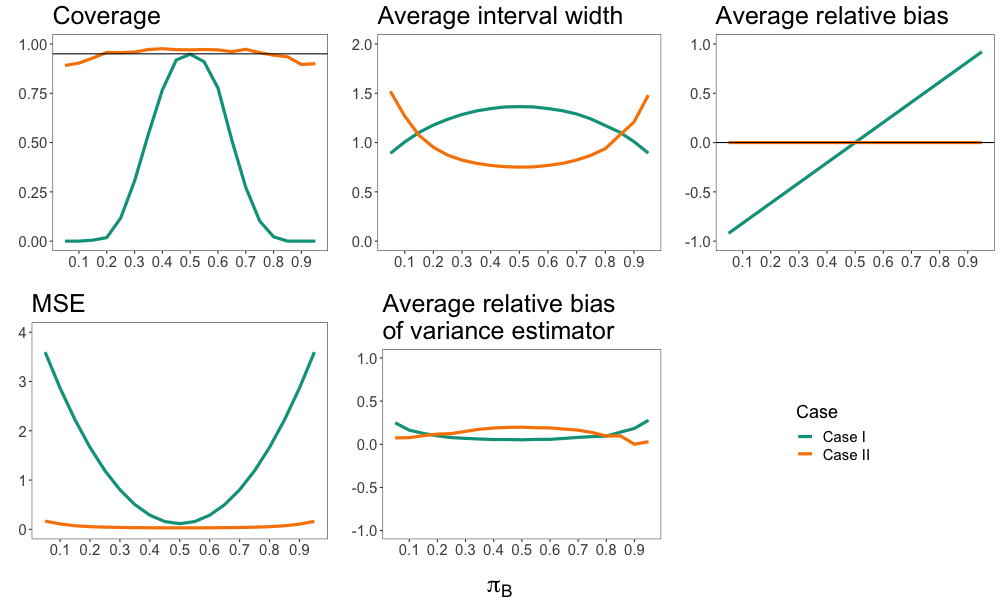}
  \caption{Positively correlated outcomes}
\end{figure}

\section{Properties of the assignment indicator}\label{supp:prop_assign}
The second order moments of the assignment indicator variables are summarized in the following two lemmas.

\begin{lemma}
\begin{align*}
Cov\left(W_{i,A}(z_{A1}), W_{i',A}(z_{A2})\right) =\frac{N_{+1_A \cdot} N_{-1_A \cdot}} {N^2}(-1)^{\mathds{1}(z_{A1}\neq z_{A2})}\left(\frac{-1}{N-1}\right)^{\mathds{1}(i \neq i')}.
\end{align*}
\end{lemma}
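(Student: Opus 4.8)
The plan is to reduce the identity to two elementary facts about the completely randomized assignment of factor $A$, and then verify it case by case according to whether $i = i'$ and whether $z_{A1} = z_{A2}$. First I would record the two facts. Since exactly $N_{z_A\cdot}$ of the $N$ units receive level $z_A$ under a uniformly random such allocation, $W_{i,A}(z_A)$ is marginally Bernoulli with $E[W_{i,A}(z_A)] = N_{z_A\cdot}/N$, and moreover $W_{i,A}(+1_A) + W_{i,A}(-1_A) = 1$ identically for every $i$. From the first fact, $Var(W_{i,A}(z_A)) = \tfrac{N_{z_A\cdot}}{N}\cdot\tfrac{N-N_{z_A\cdot}}{N} = \tfrac{N_{+1_A \cdot}N_{-1_A \cdot}}{N^2}$, using $N - N_{+1_A\cdot} = N_{-1_A\cdot}$. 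For distinct units $i \neq i'$ and a common level $z_A$, I would compute $E[W_{i,A}(z_A)W_{i',A}(z_A)] = P(\text{units } i, i' \text{ both get } z_A) = \tfrac{N_{z_A\cdot}(N_{z_A\cdot}-1)}{N(N-1)}$, subtract $(N_{z_A\cdot}/N)^2$, and simplify over the common denominator $N^2(N-1)$; the numerator collapses to $-N_{z_A\cdot}(N - N_{z_A\cdot})$, giving $Cov(W_{i,A}(z_A), W_{i',A}(z_A)) = -\tfrac{N_{+1_A\cdot}N_{-1_A\cdot}}{N^2(N-1)}$.

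Next I would dispatch the two cases with $z_{A1}\neq z_{A2}$ using complementarity: substituting $W_{\cdot,A}(z_{A2}) = 1 - W_{\cdot,A}(z_{A1})$ and invoking bilinearity of covariance flips the sign of the corresponding same-level covariance. Thus the case $i = i'$, $z_{A1}\neq z_{A2}$ gives $-Var(W_{i,A}(z_{A1})) = -\tfrac{N_{+1_A\cdot}N_{-1_A\cdot}}{N^2}$, and the case $i\neq i'$, $z_{A1}\neq z_{A2}$ gives $-Cov(W_{i,A}(z_{A1}),W_{i',A}(z_{A1})) = +\tfrac{N_{+1_A\cdot}N_{-1_A\cdot}}{N^2(N-1)}$.

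Finally I would collect the four computed values and match each against the claimed compact form $\tfrac{N_{+1_A \cdot}N_{-1_A \cdot}}{N^2}(-1)^{\mathds{1}(z_{A1}\neq z_{A2})}\big(\tfrac{-1}{N-1}\big)^{\mathds{1}(i\neq i')}$ by reading off the two indicator exponents, checking in particular that in the case $i\neq i'$, $z_{A1}\neq z_{A2}$ the two sign factors $(-1)$ and $\big(\tfrac{-1}{N-1}\big)$ multiply to $+\tfrac{1}{N-1}$, consistent with the sign obtained above. The argument is entirely routine; the only step with any friction is the algebraic cancellation in the two-unit same-level covariance, and that cancellation is clean once everything is placed over the common denominator $N^2(N-1)$.
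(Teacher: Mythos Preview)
Your proposal is correct and follows essentially the same case-by-case approach as the paper's proof. The only minor difference is that you dispatch the two $z_{A1}\neq z_{A2}$ cases via the complementarity relation $W_{\cdot,A}(z_{A2}) = 1 - W_{\cdot,A}(z_{A1})$ to flip signs, whereas the paper computes those two cases directly from the joint probabilities; both routes are equally valid and arrive at the same four values.
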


\begin{proof}
\begin{align*}
Cov\Big(W_{i,A}(z_{A1}), W_{i',A}(z_{A2})\Big) &= E\Big[W_{i, A}(z_{A1})W_{i', A}(z_{A2})\Big] -E\Big[W_{i, A}(z_{A1})\Big]E\Big[W_{i', A}(z_{A2})\Big]\\
&= E\Big[W_{i, A}(z_{A1})W_{i', A}(z_{A2})\Big] - \frac{\Nzaonedot\Nzatwodot}{N^2}
\end{align*}

\begin{itemize}
\item If $i = i'$:
\begin{itemize}
\item If $z_{A1} = z_{A2}$,
\begin{align*}
Cov\Big(W_{i,A}(z_{A1}), W_{i,A}(z_{A1})\Big) &= E\Big[W_{i, A}(z_{A1})^2\Big] - \frac{\Nzaonedot^2}{N^2} \\
&= \frac{\Nplusdot\Nminusdot}{N^2}.
\end{align*}
\item If $z_{A1} \neq z_{A2}$,
\begin{align*}
Cov\Big(W_{i,A}(z_{A1}), W_{i,A}(z_{A2})\Big) &=E\Big[W_{i,A}(z_{A1})W_{i,A}(z_{A2})\Big] - \frac{\Nzaonedot\Nzatwodot}{N^2}\\
&= - \frac{\Nplusdot\Nminusdot}{N^2}.
\end{align*}
\end{itemize}
\item If $i \neq i'$:
\begin{itemize}
\item If $z_{A1} = z_{A2}$,
\begin{align*}
E\Big[W_{i, A}(z_{A1})W_{i', A}(z_{A1})\Big] &=  P\Big(W_{i',A}(z_{A1}) = 1\Big|W_{i,A}(z_{A1}) = 1\Big)E\Big[W_{i,A}(z_{A1})\Big]\\
&= \frac{\Nzaonedot(\Nzaonedot - 1)}{N(N-1)}\\
Cov(W_{i,A}(z_{A}), W_{i',A}(z_{A})) &=  \frac{\Nzaonedot(\Nzaonedot - 1)}{N(N-1)} - \frac{\Nzaonedot^2}{N^2}\\
&=  -\frac{\Nplusdot\Nminusdot}{N^2(N-1)}.
\end{align*}
\item If $z_{A1} \neq z_{A2}$:
\begin{align*}
E\Big[W_{i,A}(z_{A1})W_{i',A}(z_{A2})\Big] &= P\Big(W_{i',A}(z_{A2}) = 1\Big|W_{i,A}(z_{A1}) = 1\Big)P\Big(W_{i,A}(z_{A1})=1\Big)\\
&= \frac{\Nplusdot\Nminusdot}{N(N-1)}\\
Cov\Big(W_{i,A}(z_{A1}), W_{i',A}(z_{A2})\Big) &= \frac{\Nzaonedot\Nzatwodot}{N(N-1)} - \frac{\Nzaonedot\Nzatwodot}{N^2}\\
&= \frac{\Nplusdot\Nminusdot}{N^2(N-1)}.
\end{align*}
\end{itemize}
\end{itemize}

Putting this all together,
\[Cov\left(W_{i,A}(z_{A1}), W_{i',A}(z_{A2})\right) =\frac{\Nplusdot\Nminusdot}{N^2}(-1)^{\mathds{1}(z_{A1}\neq z_{A2})}\left(\frac{-1}{N-1}\right)^{\mathds{1}(i \neq i')}.\]
\end{proof}


\begin{lemma}[Properties of the joint assignment indicator]
\label{lem:w_prop}
\begin{align*}
E\left[W_{i}(z_A,z_B)\right] &= \left(\frac{\Nzadot}{N}\right)E\left[W_{i,B}(z_B)\right]\\
\label{eqn:w_var}Var\left(W_{i}(z_{A}, z_{B})\right) &= \frac{E\left[W_{i,B}(z_B)\right]\Nzadot\left(N-E\left[W_{i,B}(z_B)\right]\Nzadot\right)}{N^2}
\end{align*}

\begin{align*}
&Cov\left(W_i(z_{A1}, z_{B1}), W_{i'}(z_{A2}, z_{B2})\right)\\
\nonumber &= \begin{dcases}
\frac{E\left[W_{i,B}(z_{B1})\right]E\left[W_{i',B}(z_{B2})\right]\Nplusdot\Nminusdot}{N^2(N-1)}
	& \text{if } i \neq i' \text{ and } z_{A1} \neq z_{A2}\\
\frac{-E\left[W_{i,B}(z_{B1})\right]E\left[W_{i',B}(z_{B2})\right]\Nplusdot\Nminusdot}{N^2(N-1)}
	& \text{if } i \neq i' \text{ and } z_{A1} = z_{A2} \\
\frac{-E\left[W_{i,B}(z_{B1})\right]E\left[W_{i,B}(z_{B2})\right]\Nplusdot\Nminusdot}{N^2\vphantom{(N-1)}}
	& \text{if } i = i' \text{ and } \bz \neq \bz^* \\
\frac{E\left[W_{i,B}(z_{B})\right]\Nzadot\left(N-E\left[W_{i,B}(z_{B})\right]\Nzadot\right)}{N^2\vphantom{(N-1)}}
	& \text{if }  i = i' \text{ and }  \bz = \bz^*.
\end{dcases}
\end{align*}

\end{lemma}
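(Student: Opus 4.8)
The plan is to reduce all three identities to facts about the two factor-specific indicator families $\{W_{i,A}(z_A)\}_i$ and $\{W_{i,B}(z_B)\}_i$ via the factorization $W_i(z_A,z_B)=W_{i,A}(z_A)W_{i,B}(z_B)$ together with the assumed independence of the factor-$A$ and factor-$B$ assignment mechanisms. The second-order behavior of the $A$-indicators is exactly the preceding lemma (the completely-randomized, sampling-without-replacement covariance), while the behavior of the $B$-indicators comes from the Bernoulli assumption: across distinct units the $W_{i,B}(\cdot)$ are i.i.d., so $W_{i,B}(z_{B1})$ and $W_{i',B}(z_{B2})$ are independent for $i\neq i'$; within a unit, $\{W_{i,B}(+1_B)=1\}$ and $\{W_{i,B}(-1_B)=1\}$ are mutually exclusive, so $W_{i,B}(+1_B)W_{i,B}(-1_B)=0$ and, since each $W_{i,B}(z_B)$ is $0/1$, $W_{i,B}(z_B)^2=W_{i,B}(z_B)$.

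For the expectation, independence of the two mechanisms gives $E[W_i(z_A,z_B)]=E[W_{i,A}(z_A)]E[W_{i,B}(z_B)]=(\Nzadot/N)E[W_{i,B}(z_B)]$, using $E[W_{i,A}(z_A)]=\Nzadot/N$ from the randomization of $A$. For the variance, $W_i(z_A,z_B)$ is $0/1$, so $E[W_i(z_A,z_B)^2]=E[W_i(z_A,z_B)]=(\Nzadot/N)E[W_{i,B}(z_B)]$; subtracting $\big((\Nzadot/N)E[W_{i,B}(z_B)]\big)^2$ and collecting terms over $N^2$ yields the displayed formula.

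For the covariance, expand $Cov(W_i(z_{A1},z_{B1}),W_{i'}(z_{A2},z_{B2}))$ as a cross-moment minus a product of means, and factor each piece through independence of the two mechanisms: $Cov=E[W_{i,A}(z_{A1})W_{i',A}(z_{A2})]\,E[W_{i,B}(z_{B1})W_{i',B}(z_{B2})]-E[W_{i,A}(z_{A1})]E[W_{i',A}(z_{A2})]\,E[W_{i,B}(z_{B1})]E[W_{i',B}(z_{B2})]$. Then split on the four cases. If $i\neq i'$, the $B$-factors are independent, so everything collapses to $E[W_{i,B}(z_{B1})]E[W_{i',B}(z_{B2})]\cdot Cov(W_{i,A}(z_{A1}),W_{i',A}(z_{A2}))$, and substituting the $i\neq i'$ branch of the preceding lemma, with the sign flipping according to whether $z_{A1}=z_{A2}$, gives the first two lines. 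If $i=i'$ and $\bz\neq\bz^*$, then $z_{A1}\neq z_{A2}$ or $z_{B1}\neq z_{B2}$, so the quadruple product $W_{i,A}(z_{A1})W_{i,A}(z_{A2})W_{i,B}(z_{B1})W_{i,B}(z_{B2})$ is identically $0$ (a unit cannot take two levels of one factor), leaving $-E[W_i(\bz)]E[W_i(\bz^*)]$, which reduces to the third line. If $i=i'$ and $\bz=\bz^*$ the covariance is just $Var(W_i(\bz))$ from the previous display.

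I do not anticipate a genuine obstacle: once the factorization separates the two independent sources of randomness, each piece is handled by a result already in hand. The only care needed is the case bookkeeping — tracking which of the four $(i$ vs $i'$, $\bz$ vs $\bz^*)$ regimes one is in — and, inside $i=i'$, observing that $\bz\neq\bz^*$ forces the relevant product to vanish no matter which coordinate disagrees, while within a unit the identity $W_{i,B}(z_B)^2=W_{i,B}(z_B)$ is what reconnects the last case to the variance formula.
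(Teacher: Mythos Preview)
Your proposal is correct and follows essentially the same route as the paper: factor $W_i(\bz)$ into its $A$- and $B$-components, use independence of the two mechanisms to split expectations, invoke the preceding lemma for the $A$-covariances in the $i\neq i'$ cases, and in the $i=i'$, $\bz\neq\bz^*$ case observe that the product $W_i(\bz)W_i(\bz^*)$ vanishes so the covariance is $-E[W_i(\bz)]E[W_i(\bz^*)]$. The paper's proof is organized identically, with the same case split and the same appeal to the Bernoulli structure for the variance.
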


\begin{proof}
Using the properties of a Bernoulli distribution,
\begin{align*}
Var\Big(W_i(z_A, z_B)\Big) =& E\left[W_{i}(z_A, z_B)\right]\left(1-E\left[W_{i}(z_A,z_B)\right]\right)\\
&= \frac{E\left[W_{i,B}(z_B)\right]\Nzadot\left(N-E\left[W_{i,B}(z_B)\right]\Nzadot\right)}{N^2}.
\end{align*}

Recall from the previous section that
\[Cov\left(W_{i,A}(z_{A1}), W_{i',A}(z_{A2})\right) =\frac{\Nplusdot\Nminusdot}{N^2}(-1)^{\mathds{1}(z_{A1}\neq z_{A2})}\left(\frac{-1}{N-1}\right)^{\mathds{1}(i \neq i')}.\]

Then we can again break the covariance terms into cases by $i$ and $\bz$.

For $i \neq i'$,
\begin{align*}
Cov\Big(W_i(z_{A1}, z_{B1}), W_{i'}(z_{A2}, z_{B2})\Big) =& E\Big[W_i(z_{A1}, z_{B1})W_{i'}(z_{A2}, z_{B2})\Big]-E\Big[W_i(z_{A1}, z_{B1})\Big]E\Big[W_{i'}(z_{A2}, z_{B2})\Big]\\
=& E\Big[W_{i, B}(z_{B1})\Big]E\Big[W_{i', B}(z_{B2})\Big]Cov\Big(W_{i, A}(z_{A1}),W_{i', A}(z_{A2})\Big).
\end{align*}


For $i = i'$ if $z_{A1} \neq z_{A2}$ or $z_{B1} \neq z_{B2}$ then
\begin{align*}
Cov\Big(W_i(z_{A1}, z_{B1}), W_{i}(z_{A2}, z_{B2})\Big) &= -E\Big[W_i(z_{A1}, z_{B1})\Big]E\Big[W_{i}(z_{A2}, z_{B2})\Big]\\
&=-\frac{\Nzaonedot\Nzatwodot}{N^2}E\Big[W_{i,B}(z_{B1})\Big]E\Big[W_{i, B}(z_{B2})\Big].
\end{align*}

Otherwise, for $i = i'$ if $z_{A1} = z_{A2}$ and $z_{B1} = z_{B2}$,

\begin{align*}
Cov\Big(W_i(z_{A1}, z_{B1}), W_{i}(z_{A2}, z_{B2})\Big) = Var\Big(W_i(z_A, z_B)\Big).
\end{align*}

\end{proof}

\section{Proof of results in Section~\ref{sec:caseI}: Case II estimators}\label{supp:sec_2}


\subsection*{Proof of Theorem~\ref{thm:theta1_hat_var}}
 \label{supp:theta1_hat_var}
 
 Before the proof, first note the following two general properties for any potential outcomes $U_i$ (with $\overline{U}$ indicating the relevant finite-population mean) that are defined based on assignment to factor $A$ alone (similar to a single factor experiment) or based on assignment to both $A$ and $B$:
\begin{align*}
    Var\left(\frac{1}{\Nzaonedot}\sum_{i=1}^n W_{i,A}(z_A)U_i(z_A)\right) &= \frac{N-\Nzaonedot}{N\Nzaonedot}\frac{1}{N-1}\sum_{i=1}^n\left(U_i(z_A) - \overline{U}(z_A)\right)^2\\
    Var\left(\sum_{z_B}W_{i,B}(z_B)U_i(z_A, z_B)\Bigg|z_A\right) &= \sum_{z_B}\pi_B(1-\pi_B)U_i(z_A, z_B)^2 - 2\pi_B(1-\pi_B)U_i(z_A, +1_B)U_i(z_A, -1_B)\\
    &=\pi_B(1-\pi_B)\left(U_i(z_A, +1_B)-U_i(z_A, -1_B)\right)^2
\end{align*}

\begin{proof}
Let $\bm{W}_A(z_A)$ be a vector of indicators for whether each unit is assigned to level $z_A$ of factor $A$.
\begin{align*}
&Var\left(\overline{Y}^{\text{obs}}_{1}(z_{A}, \cdot)\right)\\
 &= E\left[Var\left(\overline{Y}^{\text{obs}}_{1}(z_{A}, \cdot)\Big|\bm{W}_A(z_A)\right)\right] + Var\left(E\left[\overline{Y}^{\text{obs}}_{1}(z_{A}, \cdot)\Big|\bm{W}_A(z_A)\right]\right)\\
&= E\left[\frac{1}{\Nzadot^2}Var\left(\sum_{i=1}^NW_{i,A}(z_{A})\sum_{z_{B}}W_{i,B}(z_{B})Y_i(z_{A},z_{B})\Bigg|\bm{W}_A(z_A)\right)\right]\\
& + Var\left(E\left[\frac{1}{\Nzadot}\sum_{i=1}^NW_{i,A}(z_{A})\sum_{z_{B}}W_{i,B}(z_{B})Y_i(z_{A},z_{B})\Bigg|\bm{W}_A(z_A)\right]\right)\\
&= E\left[\frac{1}{\Nzadot^2}\sum_{i=1}^NW_{i,A}(z_{A})\pi_B(1-\pi_B)\left(Y_i(z_{A},+1_{B})-Y_i(z_{A},-1_{B})\right)^2\right]\\
& + Var\left(\frac{1}{\Nzadot}\sum_{i=1}^NW_{i,A}(z_{A})\left(\pi_BY_i(z_{A},+1)+(1-\pi_B)Y_i(z_{A},-1)\right)\right)\\
&= \frac{\pi_B(1-\pi_B)}{N\Nzadot}\sum_{i=1}^N\left(Y_i(z_{A},+1)-Y_i(z_{A},-1)\right)^2\\
& + \frac{N-\Nzadot}{N\Nzadot}\sum_{i=1}^N\frac{\left(\pi_BY_i(z_{A},+1)+(1-\pi_B)Y_i(z_{A},-1)-\left[\pi_B\overline{Y}(z_{A},+1)+(1-\pi_B)\overline{Y}(z_{A},-1)\right]\right)^2}{(N-1)}
\end{align*}

To simplify notation, let $X_i(z_A) = \pi_BY_i(z_A,+1_B)+(1-\pi_B)Y_i(z_A,-1_B)$.
\begin{align*}
&Cov\left(\overline{Y}^{\text{obs}}_1(+1_A, \cdot), \overline{Y}^{\text{obs}}_1(-1_A, \cdot)\right)\\
=&E\left[Cov\left(\overline{Y}^{\text{obs}}_1(+1_A, \cdot), \overline{Y}^{\text{obs}}_1(-1_A, \cdot)|\bm{W}_A(+1_A)\right)\right]\\
+&Cov\left(E\left[\overline{Y}^{\text{obs}}_1(+1_A, \cdot)|\bm{W}_A(+1_A)\right], E\left[\overline{Y}^{\text{obs}}_1(-1, \cdot)|\bm{W}_A(+1_A)\right]\right)\\
&=0+Cov\left(\frac{1}{\Nplusdot}\sum_{i=1}^NW_{i,A}(+1)X_i(+1_A), \frac{1}{\Nminusdot}\sum_{i=1}^NW_{i,A}(-1_A)X_i(-1_A)\right)\\
&=\frac{1}{\Nplusdot\Nminusdot}\Big(-\sum_{i=1}^N\frac{\Nplusdot\Nminusdot}{N^2}X_i(+1_A)X_i(-1_A)+\sum_{i=1}^N\sum_{i' \neq i}\frac{\Nplusdot\Nminusdot}{N^2(N-1)}X_i(+1_A)X_{i'}(-1_A)\Big)\\
&=-\frac{1}{N^2}\Big(\sum_{i=1}^NX_i(+1_A)X_i(-1_A)-\sum_{i=1}^N\sum_{i' \neq i}\frac{1}{N-1}X_i(+1_A)X_{i'}(-1_A)\Big)\\
&=-\frac{1}{N(N-1)}\sum_{i=1}^N\left(X_i(+1_A)-\overline{X}(+1_A)\right)\left(X_i(-1_A)-\overline{X}(-1_A)\right)
\end{align*}
In the above simplification we used the fact that $Cov\left(\overline{Y}_1(+1_A, \cdot), \overline{Y}_1(-1_A, \cdot)|\bm{W}_A(+1_A)\right)=0$ because all of the $W_{i,B}$ for different units are independent and for the same unit $W_i(+1_A)W_i(-1_A)=0$.

This gives us
\begin{align*}
Var\Big(\widehat{\theta}_{A,1}\Big)&= \sum_{z_A}\frac{\pi_B(1-\pi_B)}{N\Nzadot}\sum_{i=1}^N\left(\theta_{i,B|z_A}\right)^2\\
& + \sum_{z_A}\frac{N-\Nzadot}{N\Nzadot}\sum_{i=1}^N\frac{\left(X_i(z_{A})-\overline{X}(z_{A})\right)^2}{(N-1)}\\
&+\frac{2}{N(N-1)}\sum_{i=1}^N\left(X_i(+1_{A})-\overline{X}(+1_{A})\right)\left(X_i(-1_{A})-\overline{X}(-1_{A})\right)\end{align*}

Next we reformat this expression of variance to make it easier to work with.
We break the variance into 3 parts as follows:
\begin{align*}
Var\Big(\widehat{\theta}_{A,1}\Big)&= \underbrace{\sum_{z_A}\frac{\pi_B(1-\pi_B)}{N\Nzadot}\sum_{i=1}^N\left(\theta_{i,B|z_A}\right)^2}_{\textbf{A}}\\
& + \underbrace{\sum_{z_A}\frac{N-\Nzadot}{N\Nzadot}\sum_{i=1}^N\frac{\left(X_i(z_{A})-\overline{X}(z_{A})\right)^2}{(N-1)}}_{\textbf{B1}}\\
&+\underbrace{\frac{2}{N(N-1)}\sum_{i=1}^N\left(X_i(+1_{A})-\overline{X}(+1_{A})\right)\left(X_i(-1_{A})-\overline{X}(-1_{A})\right)}_{\textbf{B2}}.
\end{align*}

First denote
\begin{align*}
S_X^2 &= \frac{1}{N-1}\sum_{i=1}^N\left(X_i(+1_{A}) - X_i(-1_{A}) -\left(\overline{X}(+1_{A})-\overline{X}(-1_{A})\right)\right)^2\\
&=\frac{1}{N-1}\left[\sum_{z_A}\sum_{i=1}^N\left(X_i(z_{A})-\overline{X}(z_{A})\right)^2-2\sum_{i=1}^N\left(X_i(+1_{A})-\overline{X}(+1_{A})\right)\left(X_i(-1_{A})-\overline{X}(-1_{A})\right)\right].
\end{align*}

Note that terms \textbf{B1} and \textbf{B2} together are the Neyman variance we would obtain if we ran a single factor experiment using $X_i(+1_{A})$ and $X_i(-1_{A})$ as potential outcomes.
So we can rewrite this variance expression using $S_X^2$ in the usual way \citep[see][Chapter 6]{Imbens2015}.
Hence for the combination of terms \textbf{B1} and \textbf{B2} we have
\begin{align*}
&\textbf{B1} + \textbf{B2}\\
&=\sum_{z_A}\frac{N-\Nzadot}{N\Nzadot}\sum_{i=1}^N\frac{\left(X_i(z_{A})-\overline{X}(z_{A})\right)^2}{(N-1)}+\frac{2}{N(N-1)}\sum_{i=1}^N\left(X_i(+1_{A})-\overline{X}(+1_{A})\right)\left(X_i(-1_{A})-\overline{X}(-1_{A})\right)\\
&=\sum_{z_{A}}\frac{1}{\Nzadot}\frac{1}{N-1}\sum_{i=1}^N\left(X_i(z_{A})-\overline{X}(z_{A})\right)^2 - \frac{S_X^2}{N}.
\end{align*}

Now we expand $\frac{1}{N-1}\sum_{i=1}^N\left(X_i(z_A) - \overline{X}(z_A)\right)^2$:
\begin{align*}
&\frac{1}{N-1}\sum_{i=1}^N\left(X_i(z_A) - \overline{X}(z_A)\right)^2\\
 &= \frac{1}{N-1}\sum_{i=1}^N\left(\sum_{z_B}E[W_{i,B}(z_B)]\left(Y_i(z_A, z_B) - \overline{Y}(z_A, z_B)\right)\right)^2\\
&= \frac{1}{N-1}\sum_{i=1}^N\sum_{z_B}E[W_{i,B}(z_B)]^2\left(\left(Y_i(z_A, z_B) - \overline{Y}(z_A, z_B)\right)\right)^2\\
& +  \frac{2\pi_B(1-\pi_B)}{N-1}\sum_{i=1}^N\left(Y_i(z_A, +1_B) - \overline{Y}(z_A, +1_B)\right)\left(Y_i(z_A, -1_B) - \overline{Y}(z_A, -1_B)\right)\\
&= \sum_{z_B}E[W_{i,B}(z_B)]^2S^2(z_A, z_B)+  2\pi_B(1-\pi_B)S^2\left((z_A, +1_B), (z_A, -1_B)\right).
\end{align*}

We similarly expand the term $\frac{1}{N-1}\sum_{i=1}^N\left(X_i(+1_{A})-\overline{X}(+1_{A})\right)\left(X_i(-1_{A})-\overline{X}(-1_{A})\right)$ as follows:
\begin{align*}
&\frac{1}{N-1}\sum_{i=1}^N\left(X_i(+1_{A})-\overline{X}(+1_{A})\right)\left(X_i(-1_{A})-\overline{X}(-1_{A})\right)\\
&=\frac{1}{N-1}\sum_{i=1}^N\sum_{z_{B1}}\sum_{z_{B2}}E[W_{i,B}(z_{B1})]E[W_{i,B}(z_{B2})]\left(Y_i(+1_A, z_{B1}) - \overline{Y}(+1_A, z_{B1})\right)\left(Y_i(-1_A, z_{B2}) - \overline{Y}(-1_A, z_{B2})\right)\\
&=\sum_{z_{B1}}\sum_{z_{B2}}E[W_{i,B}(z_{B1})]E[W_{i,B}(z_{B2})]S^2\left((+1_A, z_{B1}),(-1_A, z_{B2})\right).
\end{align*}

Then our expression of $S_X^2$ is
\begin{align*}
S_X^2&= \frac{1}{N-1}\sum_{i=1}^N\left(X_i(+1_{A}) - X_i(-1_{A}) -\left(\overline{X}(+1_{A})-\overline{X}(-1_{A})\right)\right)^2\\
&= \frac{1}{N-1}\sum_{i=1}^N\left(\sum_{z_B}E[W_{i,B}(z_B)]\left(Y_i(+1_A, z_B) - Y_i(-1_A, z_B) - \left(\overline{Y}(+1_A, z_B)-\overline{Y}(-1_A, z_B)\right)\right)\right)^2\\
&= \frac{1}{N-1}\sum_{i=1}^N\left(\sum_{z_B}E[W_{i,B}(z_B)]\left(\theta_{i,A|z_B} -\theta_{A|z_B}\right)\right)^2.
\end{align*}

Putting this together,
\begin{align*}
Var\Big(\widehat{\theta}_{A,1}\Big)&= \sum_{z_A}\frac{\pi_B(1-\pi_B)}{N\Nzadot}\sum_{i=1}^N\left(\theta_{i,B|z_A}\right)^2\\
&+\sum_{z_{A}}\frac{1}{\Nzadot}\left[\sum_{z_B}E[W_{i,B}(z_B)]^2S^2(z_A, z_B)+  2\pi_B(1-\pi_B)S^2\left((z_A, +1_B), (z_A, -1_B)\right)\right]\\
& - \frac{1}{N-1}\sum_{i=1}^N\left(\sum_{z_B}E[W_{i,B}(z_B)]\left(\theta_{i,A|z_B} -\theta_{A|z_B}\right)\right)^2.
\end{align*}
\end{proof}

\subsection*{Proof of Theorem \ref{thm:caseIvarbias}}\label{supp:bias_var_case_i}

\begin{proof}
Our variance estimator uses
\begin{align*}
s^2(z_{A}) &= \frac{1}{N-1}\sum_{i:W_{i,A}(z_A)=1}\left(Y_i^{obs} - \overline{Y}_1^{obs}(z_{A}, \cdot)\right)^2\\
&= \frac{1}{N-1}\sum_{i:W_{i,A}(z_A)=1}\left(\sum_{z_B}W_{i,B}(z_B)Y_i(z_A, z_B) - \overline{Y}_1^{obs}(z_{A}, \cdot)\right)^2.
\end{align*}

We can condition on $\bm{W}_B(z_B)$, the vector of indicators of which units are assigned to which level of factor B.
This allows us to use the fact that $s^2(z_A)$ is then an unbiased estimator for the variance of potential outcomes under treatment $z_A$ where we fix the treatment level of factor $B $ for each unit.
That is, we can write the following:
\begin{align*}
E[s^2(z_A)|\bm{W}_B(z_B)] &= \frac{1}{N-1}\sum_{i=1}^N\left(\sum_{z_B}W_{i,B}(z_B)Y_i(z_A, z_B) - \frac{\sum_{i=1}^N\sum_{z_B}W_{i,B}(z_B)Y_i(z_A, z_B)}{N}\right)^2\\
&= \frac{1}{N-1}\sum_{i=1}^N\left(\sum_{z_B}\left[W_{i,B}(z_B)Y_i(z_A, z_B) - \frac{\sum_{i=1}^NW_{i,B}(z_B)Y_i(z_A, z_B)}{N}\right]\right)^2.
\end{align*}
Denote
\[\overline{Y}^{obs*}_{B}(z_A,z_B) =\frac{\sum_{i=1}^NW_{i,B}(z_B)Y_i(z_A, z_B)}{N}.\]
Note that we do not actually observe this quantity, the \textit{obs} just indicates that it still involves a random assignment mechanism.

Then we can write
\begin{align*}
&E[s^2(z_A)|\bm{W}(z_B)] \\
&= \frac{1}{N-1}\sum_{i=1}^N\left(\sum_{z_B}\left[W_{i,B}(z_B)Y_i(z_A, z_B) - \overline{Y}^{obs*}_{B}(z_A,z_B)\right]\right)^2\\
&= \sum_{z_B}\underbrace{\frac{1}{N-1}\sum_{i=1}^N\left(W_{i,B}(z_B)Y_i(z_A, z_B) - \overline{Y}^{obs*}_{B}(z_A,z_B)\right)^2}_{\textbf{A}}\\
&+2\underbrace{\frac{1}{N-1}\sum_{i=1}^N\left(W_{i,B}(+1_B)Y_i(z_A, +1_B) -\overline{Y}^{obs*}_{B}(z_A,+1_B)\right)\left(W_{i,B}(-1_B)Y_i(z_A, -1_B) - \overline{Y}^{obs*}_{B}(z_A,-1_B)\right)}_{\textbf{B}}.
\end{align*}
Let's start by simplifying term \textbf{A}.
\begin{align*}
&\frac{1}{N-1}\sum_{i=1}^N\left(W_{i,B}(z_B)Y_i(z_A, z_B) - \overline{Y}^{obs*}_{B}(z_A,z_B)\right)^2\\
&=\frac{1}{N-1}\left[\sum_{i=1}^NW_{i,B}(z_B)Y_i(z_A, z_B)^2 - N\overline{Y}^{obs*}_{B}(z_A,z_B)^2\right]\\
\end{align*}

Eventually we want to find the expectation of $s^2(z_A)$ over assignment to factor $B$. We start by finding the expectation of $\overline{Y}^{obs*}_{B}(z_A,z_B)^2$.
\begin{align*}
&E\left[\overline{Y}^{obs*}_{B}(z_A,z_B)^2\right]\\
&=E\left[\left(\frac{\sum_{i=1}^NW_{i,B}(z_B)Y_i(z_A, z_B)}{N}\right)^2\right]\\
&=E\left[\frac{\sum_{i=1}^NW_{i,B}(z_B)Y_i(z_A, z_B)^2}{N^2}+\frac{\sum_{i=1}^N\sum_{j \neq i}W_{i,B}(z_B)W_{j,B}(z_B)Y_i(z_A, z_B)Y_j(z_A, z_B)}{N^2}\right]
\end{align*}

Taking the expectation over factor $B$ for the whole term \textbf{A},
\begin{align*}
&E\left[\frac{1}{N-1}\sum_{i=1}^N\left(W_{i,B}(z_B)Y_i(z_A, z_B) - \overline{Y}^{obs*}_{B}(z_A,z_B)\right)^2\right]\\
&=\frac{1}{N-1}\left[\frac{N-1}{N}\sum_{i=1}^NE[W_{i,B}(z_B)]Y_i(z_A, z_B)^2 - \frac{1}{N}\sum_{i=1}^N\sum_{j \neq i}E[W_{i,B}(z_B)]^2Y_i(z_A, z_B)Y_j(z_A, z_B)\right]\\
&=\frac{1}{N-1}\Big[\frac{N-1}{N}\sum_{i=1}^NE[W_{i,B}(z_B)]^2Y_i(z_A, z_B)^2 - \frac{1}{N}\sum_{i=1}^N\sum_{j \neq i}E[W_{i,B}(z_B)]^2Y_i(z_A, z_B)Y_j(z_A, z_B)\\
& + \frac{N-1}{N}\sum_{i=1}^N\pi_B(1-\pi_B)Y_i(z_A, z_B)^2 \Big]\\
&=E[W_{i,B}(z_B)]^2S^2(z_A, z_B) + \frac{\pi_B(1-\pi_B)}{N}\sum_{i=1}^NY_i(z_A, z_B)^2.
\end{align*}

Now we turn to term \textbf{B}.
\begin{align*}
&\frac{1}{N-1}\sum_{i=1}^N\left(W_{i,B}(+1_B)Y_i(z_A, +1_B) -\overline{Y}^{obs*}_{B}(z_A,+1_B)\right)\left(W_{i,B}(-1_B)Y_i(z_A, -1_B) - \overline{Y}^{obs*}_{B}(z_A,-1_B)\right)\\
&=\frac{1}{N-1}\sum_{i=1}^N\Big(\overline{Y}^{obs*}_{B}(z_A,+1_B)\overline{Y}^{obs*}_{B}(z_A,-1_B) - W_{i,B}(+1_B)Y_i(z_A, +1_B)\overline{Y}^{obs*}_{B}(z_A,-1_B)\\
&\quad -W_{i,B}(-1_B)Y_i(z_A, -1_B)\overline{Y}^{obs*}_{B}(z_A,+1_B)\Big)\\
&=\frac{1}{N-1}\left(N\overline{Y}^{obs*}_{B}(z_A,+1_B)\overline{Y}^{obs*}_{B}(z_A,-1_B) - 2N\overline{Y}^{obs*}_{B}(z_A,+1_B)\overline{Y}^{obs*}_{B}(z_A,-1_B)\right)\\
&=-\frac{N}{N-1}\overline{Y}^{obs*}_{B}(z_A,+1_B)\overline{Y}^{obs*}_{B}(z_A,-1_B)
\end{align*}

Next we take the expectation of this result over factor $B$.
\begin{align*}
&E\left[-\frac{N}{N-1}\overline{Y}^{obs*}_{B}(z_A,+1_B)\overline{Y}^{obs*}_{B}(z_A,-1_B)\right]\\
 &=-\frac{N}{N-1}E\left[\frac{1}{N^2}\left(\sum_{i=1}^NW_{i,B}(+1_B)Y_i(z_A, +1_B)\right)\left(\sum_{i=1}^NW_{i,B}(-1_B)Y_i(z_A, -1_B)\right)\right]\\
    &=-\frac{N}{N-1}E\left[\frac{1}{N^2}\sum_{i=1}^N\sum_{j\neq i}W_{i,B}(+1_B)W_{j,B}(-1_B)Y_i(z_A, +1_B)Y_j(z_A, -1_B)\right]\\
        &=-\frac{\pi_B(1-\pi_B)}{N(N-1)}\sum_{i=1}^N\sum_{j\neq i}Y_i(z_A, +1_B)Y_j(z_A, -1_B)
\end{align*}

Putting it all together, we have
\begin{align*}
E[s^2(z_A)]&=\sum_{z_B}E[W_{i,B}(z_B)]^2S^2(z_A, z_B) + \sum_{z_B}\frac{\pi_B(1-\pi_B)}{N}\sum_{i=1}^NY_i(z_A, z_B)^2\\
& -\frac{2\pi_B(1-\pi_B)}{N(N-1)}\sum_{i=1}^N\sum_{j\neq i}Y_i(z_A, +1_B)Y_j(z_A, -1_B).
\end{align*}

Hence
\begin{align*}
E\left[\widehat{Var}(\widehat{\theta}_{A,1})\right]&=\sum_{z_A}\frac{E[s^2(z_A)]}{N_{z_A \cdot}}\\
&=\sum_{z_A}\frac{1}{N_{z_A \cdot}}\Bigg[ \sum_{z_B}E[W_{i,B}(z_B)]^2S^2(z_A, z_B) + \sum_{z_B}\frac{\pi_B(1-\pi_B)}{N}\sum_{i=1}^NY_i(z_A, z_B)^2\\
& -\frac{2\pi_B(1-\pi_B)}{N(N-1)}\sum_{i=1}^N\sum_{j\neq i}Y_i(z_A, +1_B)Y_j(z_A, -1_B)\Bigg]\\
&=\sum_{z_A}\frac{1}{N_{z_A \cdot}}\Bigg[ \sum_{z_B}E[W_{i,B}(z_B)]^2S^2(z_A, z_B) + 2\pi_B(1-\pi_B)S^2\left((z_A, -1_B),(z_A, +1_B)\right)\\
& \quad \quad+ \frac{\pi_B(1-\pi_B)}{N}\sum_{i=1}^N\theta_{i,B|z_A}^2\Bigg].
\end{align*}

Recalling the form of the true variance, it is easy to see that
\begin{align*}
E\left[\widehat{Var}(\widehat{\theta}_{A,1})\right] - Var(\widehat{\theta}_{A,1}) &=\frac{1}{N(N-1)}\sum_{i=1}^N\left(\sum_{z_B}E[W_{i,B}(z_B)]\left(\theta_{i,A|z_B}-\theta_{A|z_B}\right)\right)^2.
\end{align*}
This expression is exactly the bias of the standard Neyman variance estimator if we ran an experiment with $X_i(z_A)$ as potential outcomes.
\end{proof}

\subsection{Alternative formulation of variance}

Define
\begin{align*}
\nonumber \mathds{S}^2(\bz, \bz^*) = -\Bigg[&\sum_{i=1}^{N} Cov\left(W_i(\bz), W_{i}(\bz^*)\right)Y_{i}(\bz) Y_{i}(\bz^*)
+ \sum_{i=1}^{N}\sum_{i' \neq i}Cov\left(W_{i}(\bz), W_{i'}(\bz^*)\right)  Y_i(\bz) Y_{i'}(\bz^*)\Bigg]
\end{align*}
and
\begin{align*}
\mathds{S}^2(\bz) &= \sum_{i=1}^{N} Var\left(W_i(\bz)\right)Y_{i}(\bz)^2 + \sum_{i=1}^{N}\sum_{i' \neq i}Cov\left(W_{i}(\bz), W_{i'}(\bz)\right)  Y_i(\bz) Y_{i'}(\bz).
\end{align*}

\begin{lemma}[Expectation and variance of $\Nbz\overline{Y}^{\text{obs}}(\bz)$]
\label{lem:yobs_exp_var}

\begin{align*}
E\left[\Nbz\overline{Y}^{\text{obs}}(\bz)\right]
=& N_{z_A}E[W_{i,B}(z_B)]\overline{Y}(\bz)\\
Var\left(\Nbz\overline{Y}^{\text{obs}}(\bz)\right) 
= &\mathds{S}^2(\bz)\\
Cov\left(\Nbz\overline{Y}^{\text{obs}}(\bz), \Nbzstar\overline{Y}^{\text{obs}}(\bz^*) \right)
=& -\mathds{S}^2(\bz, \bz^*)
\end{align*}

\end{lemma}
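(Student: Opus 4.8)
The plan is to use the identity $\Nbz\bar{Y}^{\text{obs}}(\bz) = \sum_{i=1}^N W_i(\bz)Y_i(\bz)$ from (\ref{eqn:ybarobs}): since the potential outcomes $Y_i(\bz)$ are fixed constants in the finite-population framework, $\Nbz\bar{Y}^{\text{obs}}(\bz)$ is just a linear combination of the joint assignment indicators $W_i(\bz)$. All three claims then reduce to linearity of expectation and bilinearity of covariance, combined with the first- and second-moment formulas for the $W_i(\bz)$ collected in Lemma~\ref{lem:w_prop}.

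For the expectation, I would write $E[\Nbz\bar{Y}^{\text{obs}}(\bz)] = \sum_{i=1}^N E[W_i(\bz)]\,Y_i(\bz)$ and substitute $E[W_i(z_A,z_B)] = (\Nzadot/N)\,E[W_{i,B}(z_B)]$ from Lemma~\ref{lem:w_prop}. Because this multiplier is the same for every $i$, it factors out of the sum, and $\sum_{i=1}^N Y_i(\bz) = N\bar{Y}(\bz)$ yields $\Nzadot\,E[W_{i,B}(z_B)]\,\bar{Y}(\bz)$, which is the stated value (the $N_{z_A}$ in the lemma being the same as $\Nzadot$).

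For the variance and covariance, I would expand $Var\big(\sum_i W_i(\bz)Y_i(\bz)\big) = \sum_i Y_i(\bz)^2\,Var(W_i(\bz)) + \sum_i\sum_{i'\neq i} Y_i(\bz)Y_{i'}(\bz)\,Cov(W_i(\bz),W_{i'}(\bz))$ and, similarly, $Cov\big(\sum_i W_i(\bz)Y_i(\bz),\, \sum_{i'} W_{i'}(\bz^*)Y_{i'}(\bz^*)\big) = \sum_i Y_i(\bz)Y_i(\bz^*)\,Cov(W_i(\bz),W_i(\bz^*)) + \sum_i\sum_{i'\neq i} Y_i(\bz)Y_{i'}(\bz^*)\,Cov(W_i(\bz),W_{i'}(\bz^*))$, separating the diagonal ($i=i'$) from the off-diagonal ($i\neq i'$) terms. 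Comparing these with the displays defining $\mathds{S}^2(\bz)$ and $\mathds{S}^2(\bz,\bz^*)$ immediately preceding the lemma shows that the two expressions are exactly $\mathds{S}^2(\bz)$ and $-\mathds{S}^2(\bz,\bz^*)$, respectively.

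There is essentially no obstacle here: the lemma is a repackaging of the second-moment structure of the assignment vector. The only mild care needed is to respect the diagonal/off-diagonal split so that it lines up with the definitions of $\mathds{S}^2(\bz)$ and $\mathds{S}^2(\bz,\bz^*)$. In particular, because those quantities are written directly in terms of $Var(W_i(\cdot))$ and $Cov(W_i(\cdot),W_{i'}(\cdot))$, one can leave the $W$-moments symbolic and avoid having to track the four explicit cases of $Cov(W_i(z_{A1},z_{B1}),W_{i'}(z_{A2},z_{B2}))$ from Lemma~\ref{lem:w_prop} at all; those explicit forms would only be substituted later when a closed-form variance is needed.
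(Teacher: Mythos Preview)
Your proposal is correct and matches the paper's approach exactly: the paper states this lemma without proof precisely because $\mathds{S}^2(\bz)$ and $\mathds{S}^2(\bz,\bz^*)$ are defined to be the bilinear expansions of the variance and (negative) covariance of $\sum_i W_i(\bz)Y_i(\bz)$, so the variance and covariance claims are tautological once one writes $\Nbz\bar{Y}^{\text{obs}}(\bz)=\sum_i W_i(\bz)Y_i(\bz)$, and the expectation claim is immediate from Lemma~\ref{lem:w_prop} as you describe.
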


\begin{lemma}[Properties of $\overline{Y}^{\text{obs}}_{1}(z_{A}, \cdot)$]
\label{lem:yobs1_prop}
\begin{align*}
E\Big[\overline{Y}^{\text{obs}}_{1}(z_{A}, \cdot)\Big] 
&= \pi_B\overline{Y}(z_{A}, +1_{B}) + (1- \pi_B)\overline{Y}(z_{A}, -1_{B})\\
Var\left(\overline{Y}^{\text{obs}}_{1}(z_{A}, \cdot)\right)
&= \frac{1}{\Nzadot^2}\Bigg[ \sum_{z_B} \mathds{S}^2\left(z_A, z_B\right) - \sum_{z_{B1} \neq z_{B2}} \mathds{S}^2((z_{A}, z_{B1}), (z_{A}, z_{B2}))\Bigg]\\
Cov\left(\overline{Y}^{\text{obs}}_{1}(+1_{A}, \cdot), \overline{Y}^{\text{obs}}_{1}(-1_{A}, \cdot)\right) &= -\frac{1}{2\Nplusdot\Nminusdot}\sum_{\bz, \bz^* \in \mathcal{Z_A^{\neq}}}\mathds{S}^2(\bz, \bz^*).
\end{align*}
\end{lemma}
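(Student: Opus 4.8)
The plan is to deduce Lemma~\ref{lem:yobs1_prop} directly from Lemma~\ref{lem:yobs_exp_var} after rewriting $\overline{Y}^{\text{obs}}_{1}(z_{A}, \cdot)$ as a short linear combination of the quantities $N_{\bz}\,\overline{Y}^{\text{obs}}(\bz)$ treated there. Starting from the definitions of $\overline{Y}^{\text{obs}}_{1}(z_{A}, \cdot)$ and $Y_i^{\text{obs}}$ and using that $W_{i,A}(z_A)W_{i,A}(z_A') = W_{i,A}(z_A)\mathds{1}(z_A' = z_A)$ (each unit receives exactly one level of $A$), I would show
\begin{equation*}
\overline{Y}^{\text{obs}}_{1}(z_{A}, \cdot) = \frac{1}{\Nzadot}\sum_{i=1}^N W_{i,A}(z_A)\sum_{z_B}W_{i,B}(z_B)Y_i(z_A,z_B) = \frac{1}{\Nzadot}\sum_{z_B} N_{z_A,z_B}\,\overline{Y}^{\text{obs}}(z_A,z_B),
\end{equation*}
recognizing $W_{i,A}(z_A)W_{i,B}(z_B)=W_i(z_A,z_B)$ and using (\ref{eqn:ybarobs}) for the last equality. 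Once this identity is available, all three claims follow by linearity of expectation and bilinearity of covariance applied to a sum of just two terms ($z_B = \pm 1_B$).

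For the expectation, pushing $E$ inside the sum and invoking $E[N_{\bz}\overline{Y}^{\text{obs}}(\bz)] = \Nzadot E[W_{i,B}(z_B)]\overline{Y}(\bz)$ from Lemma~\ref{lem:yobs_exp_var} cancels $\Nzadot$ and leaves $\sum_{z_B} E[W_{i,B}(z_B)]\overline{Y}(z_A,z_B)$; substituting $E[W_{i,B}(+1_B)] = \pi_B$ and $E[W_{i,B}(-1_B)] = 1-\pi_B$ gives the stated formula. For the variance, $Var(\overline{Y}^{\text{obs}}_{1}(z_{A}, \cdot)) = \Nzadot^{-2}\,Var\big(\sum_{z_B} N_{z_A,z_B}\overline{Y}^{\text{obs}}(z_A,z_B)\big)$; expanding the variance of the two-term sum produces two diagonal terms, to which I apply $Var(N_{\bz}\overline{Y}^{\text{obs}}(\bz)) = \mathds{S}^2(\bz)$, and two off-diagonal terms (the ordered pairs with $z_{B1}\neq z_{B2}$), to which I apply $Cov(N_{\bz}\overline{Y}^{\text{obs}}(\bz), N_{\bz^*}\overline{Y}^{\text{obs}}(\bz^*)) = -\mathds{S}^2(\bz,\bz^*)$; the result is exactly the claimed $\Nzadot^{-2}[\sum_{z_B}\mathds{S}^2(z_A,z_B) - \sum_{z_{B1}\neq z_{B2}}\mathds{S}^2((z_A,z_{B1}),(z_A,z_{B2}))]$.

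For the cross-arm covariance, bilinearity gives
\begin{equation*}
Cov\big(\overline{Y}^{\text{obs}}_{1}(+1_{A}, \cdot), \overline{Y}^{\text{obs}}_{1}(-1_{A}, \cdot)\big) = \frac{1}{\Nplusdot\Nminusdot}\sum_{z_{B1}}\sum_{z_{B2}} Cov\big(N_{+1_A,z_{B1}}\overline{Y}^{\text{obs}}(+1_A,z_{B1}),\, N_{-1_A,z_{B2}}\overline{Y}^{\text{obs}}(-1_A,z_{B2})\big).
\end{equation*}
Here every pair $\bz = (+1_A,z_{B1})$, $\bz^* = (-1_A,z_{B2})$ is distinct because the $A$-coordinates differ, so each term equals $-\mathds{S}^2(\bz,\bz^*)$ with $(\bz,\bz^*)$ ranging over the members of $\mathcal{Z_A^{\neq}}$ whose first coordinate is $+1_A$. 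Since $\mathds{S}^2$ is symmetric in its arguments — the single sum in its definition is manifestly symmetric, and the double sum over $i\neq i'$ is symmetric after relabeling $i\leftrightarrow i'$ and using symmetry of $Cov$ — summing over all ordered pairs in $\mathcal{Z_A^{\neq}}$ double-counts relative to summing over those with first coordinate $+1_A$, which produces the factor $1/2$ in the stated expression.

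The computations are routine; the only care needed is bookkeeping the ``equal versus distinct treatment pair'' classification when feeding terms into Lemma~\ref{lem:yobs_exp_var} — for the variance one gets diagonal ($\bz=\bz^*$) and off-diagonal (same $z_A$, $z_{B1}\neq z_{B2}$) contributions, whereas for the cross-arm covariance all four contributions are off-diagonal — together with exploiting the symmetry of $\mathds{S}^2$ to reconcile the sum over $\mathcal{Z_A^{\neq}}$ with the $1/2$ factor. No probabilistic input beyond Lemma~\ref{lem:yobs_exp_var} is needed.
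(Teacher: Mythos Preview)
Your proposal is correct and follows essentially the same route as the paper: the paper likewise rewrites $\overline{Y}^{\text{obs}}_{1}(z_{A}, \cdot)$ as $\Nzadot^{-1}\sum_{z_B} N_{z_A,z_B}\,\overline{Y}^{\text{obs}}(z_A,z_B)$ and then feeds the resulting two-term sums into Lemma~\ref{lem:yobs_exp_var}. Your treatment of the cross-arm covariance is in fact more careful than the paper's, which states that line without spelling out the symmetry argument for $\mathds{S}^2$ that justifies the factor $1/2$.
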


\begin{proof}
Expectation:
\begin{align*}
E\Big[\overline{Y}^{\text{obs}}_1(z_A, \cdot)\Big] 
&= \frac{1}{\Nzadot}\sum_{z_B}\sum_{i=1}^NE\Big[W_{i, A}(z_A)\Big]E\Big[W_{i, B}(z_B)\Big]Y_i(z_A, z_B)\\
&= \frac{1}{N}\sum_{i=1}^N\left(\pi_BY_i(z_A, +1_B) + (1-\pi_B)Y_i(z_A, -1_B)\right)
\end{align*}
Variance:
\begin{align*}
Var\left(\overline{Y}^{\text{obs}}_1(z_A, \cdot)\right) =& Var\left(\frac{1}{\Nzadot}\sum_{z_{B}}\sum_{i=1}^{N}Y_{i}^{\text{obs}}(z_{A}, z_{B})\right)\\
=&\frac{1}{\Nzadot^2}\Bigg[Var\left(N_{z_A, +}\overline{Y}^{\text{obs}}(z_{A}, +1) \right) + Var\left(N_{z_A, -}\overline{Y}^{\text{obs}}(z_A, -1_B)\right)\\ 
&+ 2Cov\left(N_{z_A, +}\overline{Y}^{\text{obs}}(z_{A}, +1),N_{z_A, -}\overline{Y}^{\text{obs}}(z_A, -1_B)\right)\Bigg]\\
=&\frac{1}{\Nzadot^2}\left[\sum_{z_B} \mathds{S}^2(z_{A}, z_{B}) - \sum_{z_{B1} \neq z_{B2}}\mathds{S}^2((z_{A}, z_{B1}), (z_{A}, z_{B2}))\right]
\end{align*}
Covariance:
\begin{align*}
 Cov\left(\overline{Y}^{\text{obs}}_{1}(+1_A, \cdot), \overline{Y}^{\text{obs}}_{1}(-1_A, \cdot)\right)
= -\frac{1}{2\Nplusdot\Nminusdot}\sum_{\bz, \bz^* \in \mathcal{Z_A^{\neq}}}\mathds{S}^2(\bz, \bz^*)
\end{align*}

\end{proof}

Finally, define the following sets:
\begin{align*}
\mathcal{Z_A^{=}} &= \{ (z_{A1}, z_{B1}), (z_{A2}, z_{B2}): z_{A1} = z_{A2}, z_{B1},z_{B2} \in \{-1_{B},+1_{B}\}\}\\ 
\mathcal{Z_A^{\neq}} &= \{ (z_{A1}, z_{B1}), (z_{A2}, z_{B2}): z_{A1} \neq z_{A2}, z_{B1},z_{B2} \in \{-1_{B},+1_{B}\}\}\\ 
\mathcal{Z_A^+} &= \{(z_{A1}, z_{B1}), (z_{A2}, z_{B2}) \in \mathcal{Z_A^{=}}: z_{A1} = z_{A2} = +1_{A}\}\\
\mathcal{Z_A^-} &= \{(z_{A1}, z_{B1}), (z_{A2}, z_{B2}) \in \mathcal{Z_A^{=}}: z_{A1} = z_{A2} = -1_{A}\}.
\end{align*}

Putting this together,
\begin{align*}
Var\Big(\widehat{\theta}_{A,1}\Big) =& Var\Big(\overline{Y}^{\text{obs}}_{1}(+1_A, \cdot) - \overline{Y}^{\text{obs}}_{1}(-1_A, \cdot)\Big)\\
=& Var\Big(\overline{Y}^{\text{obs}}_{1}(+1_A, \cdot)\Big) + Var\Big(\overline{Y}^{\text{obs}}_{1}(-1_A, \cdot)\Big) -2Cov\Big(\overline{Y}^{\text{obs}}_{1}(+1_A, \cdot), \overline{Y}^{\text{obs}}_{1}(-1_A, \cdot)\Big)\\
=& \frac{1}{\Nplusdot^2}\sum_{z_B} \mathds{S}^2(+1_A, z_B) - \frac{1}{\Nplusdot^2}\sum_{\bz, \bz^* \in \mathcal{Z_A^+} \cap \mathcal{Z_B^{\neq}}}\mathds{S}^2(\bz, \bz^*) + \frac{1}{\Nminusdot^2}\sum_{z_B} \mathds{S}^2(-1,z_B)\\
& - \frac{1}{\Nminusdot^2}\sum_{\bz, \bz^* \in \mathcal{Z_A^-} \cap \mathcal{Z_B^{\neq}}}\mathds{S}^2(\bz, \bz^*)
+ \frac{1}{\Nplusdot\Nminusdot}\sum_{\bz, \bz^* \in \mathcal{Z_A^{\neq}}}\mathds{S}^2(\bz, \bz^*)\\
=& \sum_{\bz} \frac{\mathds{S}^2(\bz)}{\Nzadot^2}
- \sum_{\bz, \bz^{*} \in \mathcal{Z_A^{=}} \cap \mathcal{Z_B^{\neq}}} \frac{\mathds{S}^2(\bz, \bz^*)}{\Nzadot^2} + \sum_{\bz, \bz^{*} \in \mathcal{Z_A^{\neq}}} \frac{\mathds{S}^2(\bz, \bz^*)}{\Nplusdot\Nminusdot}\\
\end{align*}

\section{Proof of results in Section~\ref{sec:caseII}: Case I estimators}\label{supp:caseII}

\subsection*{Properties of components of $\widehat{\theta}_{A,2}$}
\label{supp:am}
For convenience, define 
\[\overline{Y}^{\text{obs}}_{2}(z_{A}, \cdot) = \frac{1}{2}\left(\overline{Y}^{\text{obs}}(z_A, +1_B) + \overline{Y}^{\text{obs}}(z_A, -1_B)\right).\]
\begin{lemma}[Properties of $\overline{Y}^{\text{obs}}_{2}(z_{A}, \cdot)$]
\label{lem:yobs2_prop}

The expectation of $\overline{Y}^{\text{obs}}_{2}(z_{A}, \cdot)$ is
\begin{align*}
E\Big[\overline{Y}^{\text{obs}}_{2}(z_{A}, \cdot)\Big|\Nzaplus >0, \Nzaminus >0\Big] &= \frac{\overline{Y}(z_{A}, +1_{B}) + \overline{Y}(z_{A}, -1_{B})}{2}= \overline{Y}(z_A, \cdot).
\end{align*}

The variance of $\overline{Y}^{\text{obs}}_2(z_A, \cdot)$ is
\begin{align*}
\nonumber & Var\left(\overline{Y}^{\text{obs}}_2(z_A, \cdot)\Big|\Nzaplus>0,\Nzaminus>0\right)\\
&= \frac{1}{4N}\Bigg[ \sum_{z_B} E\left[\frac{N - \Nzazb}{\Nzazb}\Big|\Nzazb > 0\right]S^2(z_A, z_B) -  \sum_{z_{B1} \neq z_{B2}}S^2((z_A, z_{B1}),(z_A, z_{B2}))\Bigg].
\end{align*}

The covariance of $\overline{Y}^{\text{obs}}_{2}(+1_A, \cdot), \overline{Y}^{\text{obs}}_{2}(-1_A, \cdot)$ is
\begin{align*}
Cov\left(\overline{Y}_2^{\text{obs}}(+1_{A}, \cdot), \overline{Y}_2^{\text{obs}}(-1_{A}, \cdot)\Big|N_{\bz}>0 \text{ }\forall \bz \right)=& -\frac{1}{8N}\sum_{\bz, \bz^* \in \mathcal{Z_A^{\neq}}}S^2(\bz,\bz^*).
\end{align*}

\end{lemma}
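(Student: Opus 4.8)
The plan is to prove all three identities by the same device: condition on the realized vector of treatment-group sizes $\mathbf{N}=(\Nbz)_{\bz\in\mathds{Z}}$, exploit the fact recorded just before Theorem~\ref{thm:theta2_hat_var} that, given $\mathbf{N}$ with every component strictly positive, the assignment law in~(\ref{eq:joint_assignment}) is that of an unbalanced completely randomized $2^2$ factorial experiment, and then average over $\mathbf{N}$. In particular, conditional on $\mathbf{N}$ the units receiving $\bz$ form a simple random sample (without replacement) of size $\Nbz$ from the $N$ units, so the classical Neymanian moments hold: $E[\overline{Y}^{\text{obs}}(\bz)\mid\mathbf{N}]=\overline{Y}(\bz)$, $Var(\overline{Y}^{\text{obs}}(\bz)\mid\mathbf{N})=\tfrac{N-\Nbz}{N\Nbz}S^2(\bz)$, and, for $\bz\neq\bz^*$, $Cov(\overline{Y}^{\text{obs}}(\bz),\overline{Y}^{\text{obs}}(\bz^*)\mid\mathbf{N})=-\tfrac1N S^2(\bz,\bz^*)$ with $S^2(\bz,\bz^*)$ as in~(\ref{eqn:s2zz}) (see \citealp[Chapter~6]{Imbens2015} or \citealp{lu2016randomization}). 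The expectation statement is then immediate: $E[\overline{Y}^{\text{obs}}(\bz)\mid\mathbf{N}]=\overline{Y}(\bz)$ does not depend on $\mathbf{N}$, so iterated expectation gives $E[\overline{Y}^{\text{obs}}(\bz)\mid\Nzaplus>0,\Nzaminus>0]=\overline{Y}(\bz)$, and averaging over $z_B$ yields $E[\overline{Y}^{\text{obs}}_2(z_A,\cdot)\mid\cdot]=\overline{Y}(z_A,\cdot)$.

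For the variance I would apply the law of total variance with conditioning variable $\mathbf{N}$ (on the event $\{\Nzaplus>0,\Nzaminus>0\}$). The ``between'' component $Var\big(E[\overline{Y}^{\text{obs}}_2(z_A,\cdot)\mid\mathbf{N}]\big)$ vanishes because the inner expectation is the constant $\overline{Y}(z_A,\cdot)$. The ``within'' component equals $\tfrac14\,E\big[Var\big(\overline{Y}^{\text{obs}}(z_A,+1_B)+\overline{Y}^{\text{obs}}(z_A,-1_B)\bigm|\mathbf{N}\big)\big]$; expanding the conditional variance with the three formulas above, the two $S^2(z_A,z_B)$ terms acquire the random coefficient $\tfrac{N-\Nzazb}{N\Nzazb}$ while the cross term $-\tfrac1N S^2((z_A,+1_B),(z_A,-1_B))$ is nonrandom. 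Taking the outer expectation over the law of the counts leaves $S^2(z_A,z_B)$ multiplied by $E[\tfrac{N-\Nzazb}{\Nzazb}\mid\cdot]$, and rewriting the single cross term as $\tfrac12\sum_{z_{B1}\neq z_{B2}}S^2((z_A,z_{B1}),(z_A,z_{B2}))$ reproduces the stated expression after factoring out $\tfrac1{4N}$.

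The covariance is treated identically via the law of total covariance conditioned on $\mathbf{N}$ (on the event that all $\Nbz>0$): the ``between'' term is zero because both inner expectations are constants, and the ``within'' term is $\tfrac14\sum_{z_{B1}}\sum_{z_{B2}}Cov\big(\overline{Y}^{\text{obs}}(+1_A,z_{B1}),\overline{Y}^{\text{obs}}(-1_A,z_{B2})\bigm|\mathbf{N}\big)$. Each of the four summands pairs treatments that differ in factor $A$, so it equals the nonrandom quantity $-\tfrac1N S^2((+1_A,z_{B1}),(-1_A,z_{B2}))$; these four terms are exactly $\tfrac12\sum_{\bz,\bz^*\in\mathcal{Z_A^{\neq}}}S^2(\bz,\bz^*)$ once one notes each unordered pair occurs twice in that sum, giving $-\tfrac1{8N}\sum_{\bz,\bz^*\in\mathcal{Z_A^{\neq}}}S^2(\bz,\bz^*)$.

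The main obstacle I anticipate is careful bookkeeping with the conditioning, not any delicate probabilistic argument. One must justify conditioning a second time on the full vector $\mathbf{N}$ within the event $\{\Nzaplus>0,\Nzaminus>0\}$ (or $\{\Nbz>0\ \forall\bz\}$), and then check that the only objects still depending on $\mathbf{N}$ after taking the inner conditional moments are the reciprocal-count factors $\tfrac{N-\Nzazb}{\Nzazb}$. These reduce to the one-dimensional truncated-binomial expectations in the statement because $\Nzaplus$ and $\Nzaminus$ determine each other (their sum is the fixed $\Nzadot$) and because the count pairs for the two levels of $A$ are independent, so imposing the remaining positivity constraints does not alter the relevant count's law. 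The remaining steps are the routine algebra of expanding the variance and covariance of a sum of two averages and matching indices against $\{z_{B1}\neq z_{B2}\}$ and $\mathcal{Z_A^{\neq}}$.
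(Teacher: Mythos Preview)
Your proposal is correct and follows essentially the same route as the paper: condition on the vector of cell counts, invoke the completely-randomized-design moments (conditional mean $\overline{Y}(\bz)$, conditional variance $\tfrac{N-\Nbz}{N\Nbz}S^2(\bz)$, conditional covariance $-\tfrac1N S^2(\bz,\bz^*)$), and then uncondition via the laws of total expectation, variance, and covariance. The only cosmetic difference is that where you cite the cross-covariance formula $-\tfrac1N S^2(\bz,\bz^*)$ from \cite{Imbens2015}/\cite{lu2016randomization}, the paper re-derives it in place by computing $Cov\big(\tfrac{W_i(\bz)}{\Nbz},\tfrac{W_{i'}(\bz^*)}{\Nbzstar}\big)$ for $i=i'$ and $i\neq i'$ directly.
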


\begin{proof}
First note that 
\[E[\overline{Y}^{\text{obs}}(\bz)|N_{\bz}>0] = E[E[\overline{Y}^{\text{obs}}(\bz)|N_{\bz}, N_{\bz}>0]|N_{\bz}>0] = \overline{Y}(\bz)\] 
Given this result, it follows that $\overline{Y}^{\text{obs}}_{2}(z_{A}, \cdot)$ is also unbiased.

Now for variance, we have
\begin{align*}
Var&\left(\overline{Y}^{\text{obs}}_2(z_A, \cdot)\Big|\Nzaplus > 0, \Nzaminus > 0\right)\\
=& Var\left(\frac{\overline{Y}^{\text{obs}}(z_A, +1_B) + \overline{Y}^{\text{obs}}(z_A, -1_B)}{2}\Big|\Nzaplus > 0, \Nzaminus > 0\right)\\
=& \frac{1}{4}\Bigg[\underbrace{Var\left(\overline{Y}^{\text{obs}}(z_A, +1_B)\Big|\Nzaplus > 0\right)}_{\textbf{A}_1} + \underbrace{Var\left(\overline{Y}^{\text{obs}}(z_A, -1_B)\Big|\Nzaminus > 0\right)}_{\textbf{A}_2}\\
& + 2\underbrace{Cov\left(\overline{Y}^{\text{obs}}(z_A, +1_B), \overline{Y}^{\text{obs}}(z_A, -1_B)\Big|\Nzaplus > 0, \Nzaminus > 0\right)}_{\textbf{B}}\Bigg].
\end{align*}

Starting with a generalized form of $\textbf{A}_1$ and $\textbf{A}_2$:
\begin{align*}
\textbf{A} 
&=Var\left(\overline{Y}^{\text{obs}}(\bz)\Big|\Nbz > 0\right)\\
&= E\left[Var\left(\overline{Y}^{\text{obs}}(\bz)\Big|\Nbz, \Nbz > 0\right)\Big|\Nbz > 0\right] + Var\left(E\left[\overline{Y}^{\text{obs}}(\bz)\Big|\Nbz, \Nbz > 0\right]\Big|\Nbz > 0\right)\\
&= E\left[Var\left(\overline{Y}^{\text{obs}}(\bz)\Big|\Nbz, \Nbz > 0\right)\Big|\Nbz > 0\right] + Var\left(\overline{Y}(\bz)|N_{\bz} >0\right).
\end{align*}
We note that $Var\left(\overline{Y}(\bz)\right) = 0$, and that $W_i(\bz)$ given $N_{\bz}$ is equivalent to an indicator from a complete randomization, so we can use standard variance results.
\begin{align*}
Var\left(\overline{Y}^{\text{obs}}(\bz)\Big|\Nbz > 0\right) & = E\left[Var\left(\overline{Y}^{\text{obs}}(\bz)\Big|\Nbz, \Nbz > 0\right)\Big|\Nbz > 0\right]\\
&= E\left[\frac{N-\Nbz}{N\Nbz}S^2(\bz)\Big|\Nbz > 0\right]\\
&= E\left[\frac{N-\Nbz}{N\Nbz}\Big|\Nbz > 0\right]S^2(\bz)
\end{align*}

Next, consider a generalized version of $\textbf{B}$:
\begin{align*}
\textbf{B} &=
Cov\left(\overline{Y}^{\text{obs}}(\bz), \overline{Y}^{\text{obs}}(\bz^\star)\Big|\Nbz > 0, \Nbzstar > 0\right)\\
&= Cov\Big(\frac{1}{\Nbz}\sum_{i=1}^{N}Y_{i}(\bz)W_i(\bz),\frac{1}{N_{\bz^\star}}\sum_{i=1}^{N}Y_{i}(\bz^\star)W_i(\bz^\star)\Big|\Nbz > 0, \Nbzstar > 0\Big)\\
&= \sum_{i=1}^{N}Y_{i}(\bz)Y_{i}(\bz^\star) \underbrace{Cov\left(\frac{1}{\Nbz}W_i(\bz), \frac{1}{N_{\bz^\star}}W_i(\bz^\star)\Big|\Nbz > 0, \Nbzstar > 0\right)}_{\textbf{B1}}\\
&+ \sum_{i=1}^{N}\sum_{i' \neq i}Y_{i}(\bz)Y_{i'}(\bz^\star)\underbrace{Cov\left(\frac{1}{\Nbz}W_i(\bz),\frac{1}{\Nbzstar}W_{i'}(\bz^\star)\Big|\Nbz > 0, \Nbzstar > 0\right)}_{\textbf{B2}}
\end{align*}

We find the expression $\textbf{B1}$ for our case, $\bz \neq \bz^\star$.

\begin{align*}
\textbf{B1} &=
Cov\left(\frac{1}{\Nbz}W_i(\bz), \frac{1}{\Nbzstar}W_{i}(\bz^\star)\Big|\Nbz > 0, \Nbzstar > 0\right)\\
&= E\left[\frac{1}{\Nbz\Nbzstar}W_i(\bz)W_{i}(\bz^\star)\Big|\Nbz > 0, \Nbzstar > 0\right] - E\left[\frac{1}{\Nbz}W_i(\bz)\Big|\Nbz > 0\right]E\left[\frac{1}{\Nbzstar}W_{i}(\bz^\star)\Big|\Nbzstar > 0\right]\\
&= - \frac{1}{N^2}
\end{align*}

Similarly, we find the expression $\textbf{B2}$ for our case, $\bz \neq \bz^\star$.
\begin{align*}
\textbf{B2} &=
Cov\left(\frac{1}{\Nbz}W_i(\bz), \frac{1}{\Nbzstar}W_{i'}(\bz^\star)\Big|\Nbz > 0, \Nbzstar > 0\right)\\
&= E\left[\frac{1}{\Nbz\Nbzstar}E\left[W_i(\bz)W_{i'}(\bz^\star)\Big|\Nbz,\Nbzstar, \Nbz > 0, \Nbzstar > 0\right]\right] - \frac{1}{N^2}\\
&= E\left[\frac{1}{\Nbz\Nbzstar}\frac{\Nbz}{N}E\left[W_{i'}(\bz^\star)\Big|\Nbz,\Nbzstar, \Nbz > 0, \Nbzstar > 0, W_i(\bz) = 1\right]\right] - \frac{1}{N^2}\\
&= E\left[\frac{1}{\Nbz\Nbzstar}\frac{\Nbz}{N}\frac{\Nbzstar}{N-1}\Big|\Nbz > 0, \Nbzstar > 0\right] - \frac{1}{N^2}\\
&= \frac{1}{N^2(N-1)}
\end{align*}

Plugging these results in to the generalized version of $\textbf{B}$:
\begin{align*}
\textbf{B} &=
Cov\left(\overline{Y}^{\text{obs}}(\bz), \overline{Y}^{\text{obs}}(\bz^*)\Big| N_{\bz} > 0, N_{\bz^*} > 0\right)\\
&= \sum_{i=1}^{N}\underbrace{\frac{-1}{N^2}}_{\textbf{B1}}Y_{i}(\bz)Y_{i}(\bz^*) + \sum_{i=1}^{N}\sum_{i' \neq i}\underbrace{\frac{1}{N^2(N-1)}}_{\textbf{B2}}Y_{i}(\bz)Y_{i'}(\bz^*)\\
&= -\frac{1}{N}S^2(\bz, \bz^*)
\end{align*}

Finally, we find
\begin{align*}
Var&\left(\overline{Y}^{\text{obs}}_2(z_A, \cdot)\Big| \Nzaplus > 0, \Nzaminus > 0 \right)\\
=& \frac{1}{4}\Bigg[ \underbrace{E\left[\frac{N - \Nzaplus}{\Nzaplus}\Big|\Nzaplus > 0\right]\frac{S^2(z_A, +1_B)}{N}}_{\textbf{A}_1} + \underbrace{E\left[\frac{N - \Nzaminus}{\Nzaminus}\Big|\Nzaminus > 0\right]\frac{S^2(z_A, -1_B)}{N}}_{\textbf{A}_2}\\
&- 2\underbrace{\frac{1}{N}S^2((z_A, +1_B),(z_A, -1_B))}_{\textbf{B}}\Bigg]\\
=& \frac{1}{4N}\Bigg[ \sum_{z_B} E\left[\frac{N - \Nzazb}{\Nzazb}\Big|\Nzazb > 0\right]S^2(z_A, z_B) -  \sum_{z_{B1} \neq z_{B2}}S^2((z_A, z_{B1}),(z_A, z_{B2}))\Bigg].
\end{align*}

We also have
\begin{align*}
Cov&\left(\overline{Y}_2^{\text{obs}}(+1_A, \cdot), \overline{Y}_2^{\text{obs}}(-1_A, \cdot)\Big|N_{\bz}>0 \text{ }\forall \bz\right)\\
=& \frac{1}{4}Cov\Big(\overline{Y}^{\text{obs}}(+1_A, +1_B) + \overline{Y}^{\text{obs}}(+1_A, -1_B), \overline{Y}^{\text{obs}}(-1_A, +1_B) + \overline{Y}^{\text{obs}}(-1_A, -1_B)\Big|N_{\bz}>0 \text{ }\forall \bz\Big)\\
=& \frac{1}{8}\sum_{\bz, \bz^* \in \mathcal{Z_A^{\neq}}}Cov\left(\overline{Y}^{\text{obs}}(\bz), \overline{Y}^{\text{obs}}(\bz^*)\Big|N_{\bz}>0 \text{ }\forall \bz\right)\\
=& -\frac{1}{8N}\sum_{\bz, \bz^* \in \mathcal{Z_A^{\neq}}}S^2(\bz,\bz^*).
\end{align*}
\end{proof}

\subsection*{Proof of Theorem~\ref{thm:theta2_hat_var}}

\begin{proof}
First, use results from the previous section to find the following form of the variance:
\begin{align*}
Var&\left(\widehat{\theta}_{A, 2}\Big|N_{\bz}>0 \text{ }\forall \bz\right)\\
=& Var\left(\overline{Y}^{\text{obs}}_2(+1,\cdot) - \overline{Y}^{\text{obs}}_2(-1,\cdot)\Big|N_{\bz}>0 \text{ }\forall \bz\right)\\
=& Var\left(\overline{Y}^{\text{obs}}_2(+1,\cdot)\Big| N_{+,+} >0, N_{+,-} > 0 \right) + Var\left(\overline{Y}^{\text{obs}}_2(-1,\cdot) \Big| N_{-,+} >0, N_{-,-} > 0 \right)\\
&- 2Cov\left(\overline{Y}^{\text{obs}}_2(+1,\cdot), \overline{Y}^{\text{obs}}_2(-1,\cdot)\Big|N_{\bz}>0 \text{ }\forall \bz\right)\\
=&\frac{1}{4N}\left[ \sum_{\bz} E\left[\frac{N - N_\bz}{N_\bz}\Big|N_{\bz}>0\right] S^2(\bz) -  \sum_{\bz\neq \bz^*}(z_{A1}*z_{A2})S^2(\bz,\bz^*)\right].
\end{align*}

Next note that
\begin{align*}
    S^2_A &= \frac{1}{N-1}\sum_{i=1}^N\left(\sum_{\bz: z_A=+1_A}\frac{1}{2}(Y_i(\bz) - \overline{Y}(\bz)) -\sum_{\bz: z_A=-1_A}\frac{1}{2}(Y_i(\bz) - \overline{Y}(\bz))\right)^2\\
    &=\frac{1}{4}\frac{1}{N-1}\left(\sum_{\bz}(Y_i(\bz) - \overline{Y}(\bz))^2 + \sum_{\bz\neq \bz^*}(z_{A1}*z_{A2})(Y_i(\bz) - \overline{Y}(\bz))(Y_i(\bz^\star) - \overline{Y}(\bz^\star)) \right)\\
    &=\frac{1}{4}\left(\sum_{\bz}S^2(\bz) + \sum_{\bz\neq \bz^*}(z_{A1}*z_{A2})S^2(\bz,\bz^*)\right).
\end{align*}

So we have the result that
\begin{align*}
Var\left(\widehat{\theta}_{A, 2}\Big|N_{\bz}>0 \text{ }\forall \bz\right)&=
\sum_{\bz} E\left[\frac{1}{4N_\bz}\Big|N_{\bz}>0\right]S^2(\bz) -\frac{1}{N}S^2_A.
\end{align*}
\end{proof}

\textbf{Proof of Lemma~\ref{lem:mean_of_Nz_inv}}
\begin{proof}
First note that $N_{z_A, z_B} \sim Binomial\left(\Nzadot, E[W_{i,B}(z_B)]\right)$.
So then for $n \in (1,\dots,\Nzadot - 1)$,
\begin{align*}
    P(N_{z_A, z_B} = n|N_{\bz}>0) &= \frac{P(N_{\bz}>0|N_{z_A, z_B} = n)P(N_{z_A, z_B} = n)}{P(N_{\bz}>0)}\\
    &= \frac{P(N_{z_A,z_B}>0, N_{z_A,-z_B}>0|N_{z_A, z_B} = n)P(N_{z_A, z_B} = n)}{P(N_{z_A,z_B}>0, N_{z_A,-z_B}>0)}\\
    &= \frac{P(N_{z_A, z_B} = n)}{1-E[W_{i,B}(z_B)]^{\Nzadot} - (1-E[W_{i,B}(z_B)])^{\Nzadot}}\\
    &= \frac{P(N_{z_A, z_B} = n)}{1-\pi_B^{\Nzadot} - (1-\pi_B)^{\Nzadot}}
\end{align*}

Then we have,
\begin{align*}
    E\left[\frac{1}{N_{z_A, z_B}}|N_{\bz}>0\right] &= \frac{1}{1-\pi_B^{\Nzadot} - (1-\pi_B)^{\Nzadot}}\sum_{n=1}^{\Nzadot-1}\frac{1}{n}{\Nzadot \choose n}E[W_{i,B}(z_B)]^n(1-E[W_{i,B}(z_B)])^{\Nzadot - n}
\end{align*}
\end{proof}

\end{appendices}

\end{document}